\theoremstyle{definition}
\newtheorem{definition}{Definition}[section]
\newtheorem{theorem}{Theorem}[section]
\newtheorem{proposition}[theorem]{Proposition}
\newtheorem{lemma}[theorem]{Lemma}
\title{Logical Obstruction to Set Agreement Tasks for Superset-Closed Adversaries}
\author{
    ~~\hfill%
    \parbox[t]{.43\textwidth}{\centering Koki Yagi \hfil\\[1.5ex] \texttt{yagi.koki.27v@st.kyoto-u.ac.jp}}
    ~ \hfil ~
    \parbox[t]{.4\textwidth}{\centering Susumu Nishimura\\[1.5ex] \texttt{susumu@math.kyoto-u.ac.jp}}%
    \hfill
    \\ ~\\
    Dept.\ Math, Graduate School of Science, Kyoto University
    }
\begin{document}

\maketitle

%%%%%%%%%%%%%%%%%%%%%%%%%%%%%%%%%%%%%%%%%%%%%%%%%%%%%%%%%

% !TEX root = ./00main.tex

% LaTeX defintiions, \newcommand...

%%% Fonts
% \newcommand{\textpf}[1]{\texttt{\upshape{#1}}}        % program text
% \newcommand{\textrsv}[1]{\textbf{\upshape{#1}}}       % reserved keywords
% %\newcommand{\textproc}[1]{\textsf{\upshape{#1}}}       % reserved keywords
% \newcommand{\mathtype}[1]{\mathrm{#1}}      % constructor
% \newcommand{\mathcstr}[1]{\mathtt{#1}}      % constructor
% \newcommand{\mathpf}[1]{\mathtt{#1}}        % program text
% \newcommand{\mathproc}[1]{\mathsf{#1}}        % program text
% \newcommand{\tagf}[1]{\mbox{\texttt{#1}}}      % tags
% \newcommand{\mathtagf}[1]{\mathtt{#1}}      % tags

% \newcommand{\relf}[1]{\mathsf{#1}}        % relation
\newcommand{\complexf}[1]{\mathcal{#1}}   % complex
\newcommand{\cplC}{\complexf{C}}   % complex C
\newcommand{\cplD}{\complexf{D}}   % complex D
\newcommand{\cplI}{\complexf{I}}   % complex I
\newcommand{\cplO}{\complexf{O}}   % complex D

\newcommand{\keyword}[1]{\emph{#1}}       % defintion of technical words

% Epistemic logic 
\newcommand{\Pfont}[1]{\mathsf{#1}}
\newcommand{\Pinput}[2]{\Pfont{input}_{#1}^{#2}}
\newcommand{\Ptrue}{\Pfont{true}}
\newcommand{\Pfalse}{\Pfont{false}}
\newcommand{\Modalfont}[1]{\mathrm{#1}}
\newcommand{\ModK}[1]{\mathop{\Modalfont{K}_{#1}}}
\newcommand{\ModD}[1]{\mathop{\Modalfont{D}_{#1}}}
\newcommand{\ModC}[1]{\mathop{\Modalfont{C}_{#1}}}
\newcommand{\ModE}[1]{\mathop{\Modalfont{E}_{#1}}}
\newcommand{\ModLang}{\mathcal{L}}                   % modal language (the set of formulas)
\newcommand{\ModLangKCD}{\ModLang_{\Modalfont{KCD}}} % language with \ModK, \ModC, and \ModD
\newcommand{\AtomProps}{\mathrm{AP}}                 % the seto of atomic propositions
\newcommand{\AgentSet}{\mathrm{Ag}}                  % the set of agents
\newcommand{\precond}{\mathsf{pre}}                  % precondition

\newcommand{\valid}{\models}
\newcommand{\nvalid}{\not\models}

\newcommand{\relCls}[2]{\sim^{#1}_{\ModC{#2}}}  %% reflexive transitive closure

% tasks, action models
\newcommand{\kripkeMf}[1]{\mathcal{#1}} %% Kripke model
\newcommand{\smplMf}[1]{\mathcal{#1}} %% simplicial model
\newcommand{\smplInput}[1]{\smplMf{I}^{#1}} %% initial input model
\newcommand{\actMf}[1]{\mathcal{#1}}  %% action model
\newcommand{\amIS}{\actMf{IS}}    %% action model for I
\newcommand{\amBCons}{\actMf{BC}}  %% action model for consensus
\newcommand{\amPre}[1]{\mathsf{pre}_{#1}} %% pre for action model
% action representation
\newcommand{\actionRep}[2]{{#1}^{#2}}
\newcommand{\actionIS}[2]{\actionRep{#1}{#2}}
\newcommand{\actionRO}[2]{\actionRep{#1}{#2}}
% facet representation
% \newcommand{\facetBCons}[2]{\anglpair{#1,#2}}

% math hyphen
% \DeclareMathSymbol{\mhyph}{\mathalpha}{operators}{`-}
\newcommand{\amSA}[1]{\actMf{SA}_{#1}}    %% action model for set agreement

% domains
\newcommand{\Nat}{\omega}%%{\mathbb{N}}
\newcommand{\Int}{\mathbf{Int}}
\newcommand{\Procs}{\colorSet}
\newcommand{\Value}{\mathit{Value}}

%% Topology
\newcommand{\VertSet}{\mathsf{V}}
\newcommand{\Facet}{\mathsf{F}}
\newcommand{\Div}[1]{{\operatorname{#1}}}
\newcommand{\Chromatic}{\Div{Ch}}

\newcommand{\coloring}{\chi}
\newcommand{\colorSet}{\Pi}

\newcommand{\proj}{\pi}  % projection map

% adversaries and round operators
\newcommand{\advfont}[1]{\mathcal{#1}}
\newcommand{\advA}{\advfont{A}}
\newcommand{\Round}[1]{\mathcal{R}_{#1}}
\newcommand{\csize}{\mathit{csize}}   % minimum core size

% value- or view-assignment functions, corresponds to a facet #1
\newcommand{\InputAssig}[1]{\mathit{Input}_{#1}}
\newcommand{\OutputAssig}[1]{\mathit{Output}_{#1}}
\newcommand{\ViewAssig}[1]{\mathit{View}_{#1}}

% minimum and maximum view in a facet #1
\newcommand{\minView}[1]{\mathit{minView}(#1)}
\newcommand{\maxView}[1]{\mathit{maxView}(#1)}

% mathematical notations
\newcommand{\ol}[1]{\overline{#1}}
\newcommand{\ovec}[1]{\vec{#1\mathstrut}}
\newcommand{\oovec}[1]{\overrightarrow{#1\mathstrut}}

\newcommand{\PowerSet}[1]{2^{#1}}
\newcommand{\anglpair}[1]{\langle{#1}\rangle}

\newcommand{\impl}{\Rightarrow}
\newcommand{\revimpl}{\Leftarrow}
\newcommand{\nimpl}{\nRightarrow}
\newcommand{\nrevimpl}{\nLeftarrow}

% predicate definition
\newcommand{\defeq}{\triangleq}

% mathematical operators
\newcommand{\abs}[1]{{\mid}#1{\mid}}
\newcommand{\floor}[1]{\lfloor{#1}\rfloor}
\newcommand{\func}[2]{#1\mathrel{\to}#2}

% !TEX root = ./00main.tex

\begin{abstract}
 In their recent paper (GandALF 2018), 
 Goubault, Ledent, and Rajsbaum 
 provided a formal epistemic model for distributed computing. 
 Their logical model, as an alternative to the well-studied topological model, 
 provides an attractive framework for 
 refuting the solvability of a given distributed task
 by means of \emph{logical obstruction}:
 One just needs to devise a formula, 
 in the formal language of epistemic logic, 
 that describes a discrepancy between the model of computation
 and that of the task. 
 However, few instances of logical obstruction were presented  
 in their paper and specifically logical obstruction 
 to the wait-free 2-set agreement task was left as an open problem.
 Soon later, Nishida affirmatively answered to the problem by providing 
 inductively defined logical obstruction formulas
 to the wait-free $k$-set agreement tasks. 

 The present paper refines Nishida's work and 
 devises logical obstruction formulas to $k$-set agreement tasks for
 superset-closed adversaries, which supersede the wait-free model.
 These instances of logical obstruction formulas
 exemplify that the logical framework can provide yet another feasible
 method for showing impossibility of distributed tasks, though it 
 is currently being confined to one-round distributed protocols.
 The logical method has an advantage
 over the topological method that it enjoys a self-contained, 
 elementary induction proof.
 This is in contrast to topological methods,
 in which sophisticated topological tools, such as Nerve lemma, are 
 often assumed as granted. 
 \end{abstract}

% !TEX root = ./00main.tex

\section{Introduction}
\label{sec:intro}

%% General Background

In the last few decades, 
the topological theory of distributed computing
has been successful in giving a range of 
fundamental insights and results, most notably 
the simplicial complex model of distributed tasks, 
protocols as simplicial subdivisions, and 
the impossibility results on
a significant family of distributed tasks \cite{HerlihyShavit99,Herlihy:DCTopology}. 
%
% Goubault, Ledent, and Rajsbaum employed dynamic epistemic logic (DEL)
% as ... \cite{GoubaultLedentRajsbaum:GandALF18}
Though it 
has been recognized since
the earliest work by Saks and Zaharoglou \cite{SaksZaharoglou00}
that the topological model is an interpretation of 
epistemic knowledge held by distributed processes,
the rigorous connection was established 
only very recently by Goubault, Ledent, and Rajsbaum 
in \cite{GoubaultLedentRajsbaum:GandALF18}.
They defined the task solvability in terms of
a multi-agent dynamic epistemic logic (DEL) \cite{DitmarschHoekKooi:DELbook08}, 
establishing the isomorphism between the topological models %% simplicial complex models
and an appropriate class of logical models of epistemic knowledge.

This new logical model 
yields a formal means to refute task solvability:
One is obliged to find an epistemic logic formula $\psi$, called
\emph{logical obstruction}, 
such that  $\psi$ holds for the model of the task
but does not for the model of the
protocol that is relevant to the system of concern.
However, in \cite{GoubaultLedentRajsbaum:GandALF18}, 
concrete instances of logical obstruction are given only for 
the wait-free binary consensus task but none for others.
Particularly the logical obstruction to
the general wait-free $k$-set agreement task was
left as an open problem.
Soon later, Nishida settled it positively by 
devising concrete logical obstruction formulas
to the wait-free $k$-set agreement tasks \cite{Nishida:Msc20},
where he carried out the proof by elementary inductive argument
on inductively defined formulas.

% Contribution of the present paper:

This paper presents an extension of 
Nishida's result to the computation model 
of superset-closed adversaries \cite{GafniKuznetsov10},
which generalizes the basic wait-free model.
More specifically, we provide logical obstruction formulas to show
that $k$-set agreement tasks for a sufficiently small $k$  
are not solvable under superset-closed adversaries,
by a single round execution of the 
round operator \cite{HerlihyRajsbaum10}. 
% which corresponds to a single pair of write and read
% access to the shared memory per each process. 

The logical obstruction we are to present for the adversarial model 
is a generalization of  the one devised by Nishida 
for the wait-free model. 
% This provides a less clumsy proof benefited from the generalization. 
% More significantly, 
It is somewhat surprising that essentially the same logical obstruction,
up to appropriate generalization,  
works well for the models of different levels of complexity.
(See Section~\ref{subsec:obstAdversary} for a discussion of 
how the underlying simplicial structures are different in the two models.)
% (The protocol specified by the round operator 
% has an essentially more involved topological structure than the 
% wait-free immediate snapshot protocol,
% as studied in \cite{BenavidesRajsbaum16}.)
This is made possible by exploiting 
the notion of \emph{permutation subset},
a combinatorial feature intrinsic to set agreement tasks \cite{Nishida:Msc20}.

%% pros
The impossibility proofs by Nishida's and ours demonstrate that 
the logical method proposed by by Goubault, Ledent, and Rajsbaum 
\cite{GoubaultLedentRajsbaum:GandALF18} serves as 
yet another feasible method of refuting task solvability,
where the logical obstruction is expressed in
a formal language of epistemic logic.
In the logical framework, one just needs to devise a logical
obstruction formula to the task of concern and is obliged 
to prove that the formula is indeed an obstruction. 
In this paper, we show that a simple elementary inductive argument suffices, 
at least the unsolvability of $k$-set agreement tasks 
for the adversarial model is concerned.

% The restriction of the present method:
%%% one-round
We have to remark that this paper discusses 
task unsolvability is confined to 
% those instances implemented by a 
single round protocols solely 
and does not concern multiple round protocols, 
i.e., iterated execution of a single round protocol.
This is in contrast to the preceding work 
by Herlihy and Rajsbaum \cite{HerlihyRajsbaum10}
that provides a topological proof for the 
unsolvability of set agreement tasks 
for multiple round protocols. 
We would need a further technical 
development for effectively dealing with multiple round protocols in the 
logical framework and leave this topic for future investigation. 
% (See Section~\ref{sec:conclusion} for further discussion.)

%%%%%%%%%%%%%%%%%%%%%%%%%%%%%%%%%%%%%%%%%%%%%%%%%%%%%%%
%% Related work

\paragraph*{Related work.}

Topological methods have been successfully applied 
to show the impossibility of set agreement tasks
for the wait-free model \cite{HerlihyShavit99}
and for the adversary model \cite{HerlihyRajsbaum10}. 
The general proof strategy is to find a `topological' obstruction
that detects the topological inconsistency between the models
of the task and the protocol. In the case of set agreement tasks,  
the unsolvability comes from the fact that
the image of the carrier map \cite{Herlihy:DCTopology} of the task, 
i.e.,  a functional specification of the task that maps 
an input simplex to an output complex, is less connected than 
that of the protocol.
In contrast, in the logical framework proposed by  
Goubault et al.\ \cite{GoubaultLedentRajsbaum:GandALF18}, 
one is asked to devise a logical obstruction formula
in the \emph{product update model},
which encodes the input/output relation of a carrier map 
by a complex obtained by a relevant product construction.
(See Section~\ref{subsubsec:actModelProdUpdate} for the formal definition
of product update model.) 
It is often difficult to translate a topological inconsistency into 
a logical obstruction formula, since 
the complex of a product update model tends to have a fairly different, 
more complicated simplicial structure than that of the original one.

% %% cons
% It is not always obvious to import the topological proof to the DEL
% framework, because the latter argues epistemic properties in 
% a product update model, which interprets the input/output specification 
% of the subject task by a complex whose facets are certain products 
% of input/output facets. A product update in general can have a much
% complicated simplicial structure than that of the original input
% and output. 

% An advantage of the product update model is that it encodes 
% the specification as a subcomplex of $\cplC\times\cplD$,
% which by itself can be subject to topological analysis.
% Nevertheless one must notice that the cartesian product
% $\cplC\times\cplD$ and its subcomplex $\smplMf{C}[\actMf{D}]$
% often considerably complicates the original topological structure
% in $\cplC$ and $\cplD$.
% which barely inherits topological properties from  
% the original complexes $\cplC$ and $\cplD$.

The impossibility results presented in \cite{Nishida:Msc20}
and this paper are demonstrated by a self-contained, elementary proof. 
Once the semantics of the formal language of epistemic logic is learnt,
one is immediately accessible to every detail of the proof.
In contrast, impossibility proofs in topological methods
tend to resort to sophisticated theorems from topology.
For instance, the impossibility proof of set agreement tasks
given in \cite{HerlihyRajsbaum10} critically depends on 
the Nerve lemma, which is by no means an elementary topological result.
(See \cite{Bjorner95} for a combinatorial proof of Nerve lemma
and \cite{Kozlov:CombiAlgTopology} for a proof from the perspective of 
modern algebraic topology.)

% Nerve lemma is applied to fulfill the proof obligation by finding a gap
% in higher-dimensional connectivity: the simplicial complex
% concerning the task is less connected than that concerning the protocol. 

It should be noticed that the logic-based solution 
provided in this paper
does not fully substitute for the topological method.
First, as we mentioned earlier, 
our logical obstruction concerns single round protocols only
and is not immediately applicable to multiple round protocols. 
This is because a different logical obstruction formula is required
for each specific epistemic structure that varies at every 
incremental round step. 
In contrast, topological method is more robust to such 
incremental evolution in the underlying structure, 
resorting to a certain topological invariant that is kept intact 
throughout the steps of rounds.
Second, 
% while a carrier map in the topological model  
% relates each input simplex (of any dimension)
% to an output complex,
the product update model of the logical framework is only able to
relate each input facet (i.e., a simplex of maximal dimension) 
to an output complex (of the same dimension).
This means the logical framework assumes certain
crash-free systems in which no process execution fails, 
as opposed to the usual crash-prone assumption on distributed systems. 
The logical framework, however, still allows us to derive 
impossibility results for set agreement tasks: Under the asynchronous setting, 
faulty processes can be regarded just as processes 
which execute so ``slowly'' that correct processes are 
ignorant of them. \cite{Lendent:PhD20,GafniKuznetsovManolescu14}

%%% crash-free

% Remark that, however, this modeling by
% the product update is weaker than the topological modeling based on 
% carrier maps \cite{Herlihy:DCTopology}: 
% The product update relates each input facet to the corresponding 
% set of output facets that defines a complex, while the carrier map
% relates each input simplex of \emph{any dimension}
% to an output complex (of the same dimension). 

% Furthermore, the DEL framework, which models the epistemic structure by means of
% facets, i.e., maximal simplexes in which no process fails to participate, 
% assumes crash-free systems, 
% as opposed to the usual crash-prone assumption on distributed systems. 
%

%% Nishida
Naturally, the development of the present paper much owes 
to Nishida's original work \cite{Nishida:Msc20}.
Our contribution is to generalize 
his insight in the construction of logical obstruction formulas
so that it can encompass adversarial models. 
This also contributes to refine Nishida's original proof 
with appropriate level of generalization. 
We hope the present paper would 
help the significant idea in Nishida's paper,
which is of limited accessibility, 
to reach a wider audience of interest.

%%% outline
The rest of the paper is organized as follows.
Section~\ref{sec:DELsemantics} reviews the simplicial semantic model for
epistemic logic and the theorem for task solvability in the model of
dynamic update of knowledge, 
as introduced in \cite{GoubaultLedentRajsbaum:GandALF18}.
In Section~\ref{sec:obstruction}, we reproduce a classical impossibility result
in the DEL framework by means of logical obstruction,
showing that a simple combinatorial argument suffices
for demonstrating the obstruction proof,
without recourse to topological arguments.
In Section~\ref{sec:impossibleSA}, generalizing Nishida's,
we provide logical obstruction formulas
to $k$-set agreement tasks for superset-closed adversaries
and provides an inductive proof on the cardinality
of adversaries.
Section~\ref{sec:conclusion} concludes 
the paper with a summary and a discussion on
directions for future research.

% !TEX root = ./00main.tex

\section{The DEL Model for Distributed Computing}
\label{sec:DELsemantics}

Throughout the paper, 
we assume that a distributed system consists of 
$n+1$ asynchronous processes ($n\geq 0$),
which are given unique ids. 
%%% Postponed the def of \colorSet={0,1,...,n} --- Susumu

\subsection{Simplicial topology for distributed computing}
\label{subsec:topologyDC}

A \keyword{simplicial complex} (\keyword{complex} for short) $\cplC$
is a family of finite sets of vertexes, called \keyword{simplexes}, closed under set inclusion, 
that is, $Y\subseteq X$ and $X\in \cplC$ implies $Y\in \cplC$ for any pair of simplexes $X$ and $Y$. 
We say that a complex $\cplD$ is a \keyword{subcomplex} of $\cplC$, if
$\cplD\subseteq\cplC$. 
% consisting of \keyword{vertexes} of $\cplC$.
% The set of all vertexes of $\cplC$ is denoted by $V(\cplC)$.
A simplex $X\in \cplC$ is of \keyword{dimension} $n$ if $\abs{X}=n+1$. 
%%denoted by $\dim(X)$
We call a simplex $X\in \cplC$ a \keyword{facet}, if $X$ is 
a maximal simplex in $\cplC$, i.e., 
$X\subseteq Y$ implies $X=Y$ for any $Y\in\cplC$. 
A complex $\cplC$ is called \keyword{pure} (of dimension $n$), 
if every facet of $\cplC$ has the same dimension $n$. 
We write $\VertSet(\cplC)$ for the set of all vertexes in $\cplC$
and $\Facet(\cplC)$ for the set of all facets in $\cplC$. 

In topological methods for 
distributed computing \cite{Herlihy:DCTopology}, 
a simplex is used for modeling 
a system state of the collection of asynchronous processes,
where each vertex stands for a state of an individual process.
A simplicial complex stands for the set of possible states of 
a distributed system.
In this paper, we are solely concerned with the so-called 
\keyword{colored} distributed tasks. Colored tasks are modeled
by \keyword{chromatic} simplicial complex, 
where the coloring function $\coloring$ assigns a color $\coloring(v)$
for each vertex $v$ so that different vertexes contained in the same simplex
are distinctively colored, that is, 
whenever $u, v\in X$ for a simplex $X$, 
$\coloring(u)=\coloring(v)$ implies $u=v$.
The coloring function models the assignment of
unique ids to individual processes.

% Since we are concerned with distributed systems consisting of 
% a fixed number of processes that are identified by unique process ids, 
% every simplicial complex is assumed to be pure. Furthermore, 
% every complex is \keyword{chromatic}, that is, every vertex $v$ of
% the complex is given a \keyword{color}, denoted by $\coloring(v)$,
% which is the corresponding unique process id taken from $\colorSet$ 
% in a consistent way: Every pair of different vertexes
% contained in a simplex $X$ must receive different colors. 

Throughout the paper, we are solely concerned with
pure $n$-dimensional chromatic complex,
%% colored with processes $0,1,\dots,n\in\Pi$,
% pure chromatic complexes,
which we simply call \keyword{complexes}
in the rest of the paper. A chromatic complex is formally denoted by a
pair $\anglpair{\cplC,\coloring}$. Since 
a (pure chromatic) complex $\cplC$
is equally defined by the set of its facets
$\Facet(\cplC)$, so that $\cplC = \{Y\mid Y\subseteq X
\text{ for some }X\in\Facet(\cplC) \}$, 
we may also write a complex as $\anglpair{\Facet(\cplC),\coloring}$,
or even in the abridged notation $\cplC$,
leaving the coloring function $\coloring$ implicit.

A \keyword{simplicial map} $\mu: \VertSet(\cplC)\to\VertSet(\cplD)$,  
where $\cplC$ and $\cplD$ are complexes,
% of the same dimension, 
is a vertex map such that $\mu(X)$ is a simplex of $\cplD$
for any simplex $X$ of $\cplC$. 
In addition, we postulate that $\mu$ is \keyword{color-preserving}, that is,
% The simplicial map $\mu$ must also be color-preserving, that is, 
$\coloring(v)=\coloring(\mu(v))$ for every vertex $v\in\VertSet(\cplC)$. 
Hence $\mu$ preserves the dimension of simplexes, in particular,
it maps a facet of $\cplC$ to a facet of $\cplD$.
Furthermore, simplicial map commutes with intersection on simplexes,
namely, $\mu(X\cap Y)=\mu(X)\cap\mu(Y)$.

\subsection{Epistemic logic and its semantics}
\label{subsec:ELsemantics}

The (multi-agent) epistemic logic \cite{DitmarschHoekKooi:DELbook08} is a logic for 
formal reasoning of knowledge of an individual agent or a group of agents.
We assume a set $\AtomProps$ of atomic propositions:
An atomic proposition is a propositional 
symbol that states whether a certain property is qualified.
The following grammar defines the formal language of epistemic logic formulas:
\[
    \varphi::=
    p \mid \varphi\vee\varphi  \mid \varphi\wedge\varphi \mid  \neg\varphi
    \mid \ModK{a}\varphi \mid \ModC{A}\varphi \mid \ModD{A}\varphi,
\]
where $p$ ranges over $\AtomProps$,
$a$ ranges over a finite set $\AgentSet$ of agents, and
$A$ ranges over the powerset $\PowerSet{\AgentSet}$. 
As a usual convention, we write $\Pfalse$
for abbreviation of $p \wedge \neg p$, where $p$ is an arbitrary atomic proposition.

In addition to propositional connectives, epistemic logic formulas 
are equipped with
a few modal operators regarding agents' knowledge:
$\ModK{a}\varphi$ is the `knowledge' operator for a single agent,
reading ``the agent $a$ knows $\varphi$'';
$\ModC{A}\varphi$ and 
$\ModD{A}\varphi$ are operators defined for a group $A$ of agents, called 
`common knowledge' and 'distributed knowledge' operators, respectively.\footnote{%
    There is another operator worth mentioning, called `group knowledge', 
    which is of no use in the present paper
    and is omitted.}

The semantics for epistemic logic formulas
is defined over an appropriate Kripke frame.
A \keyword{Kripke frame} is formally 
a pair $\anglpair{S,\sim}$, where $S$ is
a collection of possible worlds and 
$\sim$ is a family of equivalence relations $\{\sim_a\}_{a\in\AgentSet}$
over $S$.
Each $\sim_a$ is called an \keyword{indistinguishability relation}, meaning that
the worlds $X$ and $Y$ are not distinguishable 
by the agent $a$ iff $X \sim_{a} Y$. 
A \keyword{Kripke model} $\kripkeMf{M}=\anglpair{S,\sim,L}$
is a Kripke frame along with 
a function $L: S\to \PowerSet{\AtomProps}$,
which assigns a set $L(X)$ of atomic propositions that are true
in each world $X\in S$. 

In the following, let $\relCls{}{A}$ denote
the reflexive transitive closure of $\bigcup_{a\in A}\sim_{a}$,
that is, $X \relCls{}{A} Y$ iff
$X = 
X_0 \sim_{a_0} X_1 \sim_{a_1} \cdots \sim_{a_n} X_n 
=Y$
for some $X_0, X_1, \ldots, X_n\in S$ and $a_0,a_1,\ldots, a_n\in A$ 
($n\geq 0$). 
%% D_A
Further, 
we define the relation $\sim_{\ModD{A}}$ by:  
$X \sim_{\ModD{A}} Y$ iff $X\sim_{a}Y$ for every $a\in A$.
Then for each particular world $X$ of Kripke model $\kripkeMf{M}$, 
the semantics, or truth valuation, of an epistemic logic formula $\varphi$
is given by the assertion $\kripkeMf{M}, X \models \varphi$, 
which is defined below by induction on the structure of $\varphi$.
\begin{definition}\label{def:kripkeSemantics}
    Given a Kripke model $\kripkeMf{M}=\anglpair{S,\sim,L}$, 
    the \keyword{satisfaction relation} $\kripkeMf{M}, X \models \varphi$ 
    is defined as follows, by induction on $\varphi$. 
    \begin{align*}
        \kripkeMf{M},X\models p
        & \text{\quad if $p\in L(X)$;}
        \\
        \kripkeMf{M},X\models\varphi\vee\psi
        & \text{\quad if $\kripkeMf{M},X\models\varphi$ or $\kripkeMf{M},X\models\psi$;}
        \\
        \kripkeMf{M},X\models\varphi\wedge\psi
        & \text{\quad if $\kripkeMf{M},X\models\varphi$ and $\kripkeMf{M},X\models\psi$;}
        \\
        \kripkeMf{M},X\models\neg\varphi
        & \text{\quad if $\kripkeMf{M},X\not\models\varphi$;}
        \\
        \kripkeMf{M},X\models\ModK{a}\varphi
        & \text{\quad if $\kripkeMf{M},Y\models\varphi$ for every $Y$ such that $X\sim_{a}Y$; }
        \\
        \kripkeMf{M},X\models\ModC{A}\varphi
        & \text{\quad if $\kripkeMf{M},Y\models\varphi$ for every $Y$ such that $X\relCls{}{A}Y$;}
        \\
        \kripkeMf{M},X\models\ModD{A}\varphi
        & \text{\quad if $\kripkeMf{M},Y\models\varphi$ for every $Y$ such that $X\sim_{\ModD{A}}Y$.}        
    \end{align*}

    We write $\kripkeMf{M}, X \not\models \varphi$ to mean 
    $\kripkeMf{M}, X \models \varphi$ does \emph{not} hold. 
    A formula $\varphi$ is called \keyword{valid} (in $\kripkeMf{M}$), 
    if $\kripkeMf{M}, X\models \varphi$ for every $X\in S$;
    $\varphi$ is called \keyword{invalid},  
    if $\kripkeMf{M}, X \not\models \varphi$ for some $X\in S$.
\end{definition}

We say an epistemic formula \keyword{positive},
if its every subformula of negated form $\neg \psi$ has
no occurrences of
modal operators $\ModK{a}$, $\ModC{A}$, and $\ModD{A}$.
For example, assuming $p,q\in\AtomProps$, 
$\neg q \wedge \ModC{A}\neg p$ is a positive formula, 
while $p\vee \neg (q \vee\ModC{A} p)$ is not,  
because the latter contains a negated subformula $\neg (q \vee\ModC{A} p)$,
whose scope of negation embraces a knowledge operator $\ModC{A}$.

\subsection{Dynamic epistemic logic and its simplicial model}
\label{subsec:simpEpistemicModel}

In \cite{GoubaultLedentRajsbaum:GandALF18}, 
Goubault et al.\ have shown that 
there is a tight correspondence between 
Kripke frame %%for epistemic logic
and simplicial complex. 
Restricted to 
an appropriate class of %%the class of proper local 
Kripke models,  
they are indeed the dual of each other
in a suitable category theoretic setting \cite{GoubaultRajsbaum:arXiv17b}.
In short, a simplicial complex and a Kripke model 
are just different representations of the same structure:
colors correspond to agents, facets to possible worlds, and 
adjacency of facets to relation over possible worlds.

In the rest of the paper, 
$\colorSet= \{0,1,\ldots, n\}$ is assumed to be 
the set of unique ids given to $n+1$ individual processes.
We also write $\Value$ to 
denote the set of possible local values held by the processes.
For convenience, we may often denote a colored vertex 
in a simplicial complex by a pair $(a,v) \in \colorSet\times \Value$, 
which is intended to mean a process of id $a$ that holds a local value $v$.
The coloring function is defined by $\chi((a,v))=a$.

\subsubsection{Simplicial model induced by complex}
\label{subsubsec:simplModelCpl}

In this paper, we assume the set of atomic propositions is
given by $\AtomProps= \{\Pinput{a}{v} \mid a\in \colorSet, v\in \mathit{Value}\}$,
where 
each atomic proposition $\Pinput{a}{v}$ is intended to assert that 
the value $v$ is held by the agent $a$ (that is, the process whose id is $a$). 
% the union $\AtomProps=\bigcup_{a\in\AgentSet} \AtomProps_a$, 
% where, for each $a$, 
% $\AtomProps_a= \{\Pinput{a}{v} \mid v\in \mathit{Value}\}$
% for some countable set $\mathit{Value}$ of values. 

The Kripke model that is dual to a given complex $\cplC=\anglpair{\cplC,\coloring}$
is induced as below.
\begin{definition}[Simplicial model]\label{def:simplicialModel}
    For any complex $\cplC$, we can induce a Kripke model 
    $\anglpair{\Facet(\cplC),\sim,L}$ such that:
    \begin{itemize}
        \item the set of agents is taken as the set of colors, i.e., 
        $\AgentSet=\colorSet$, 
        \item the set of possible worlds are the set of facets $\Facet(\cplC)$,
        \item the equivalence relation is defined by:
        \[
            X \sim_a Y \text{ iff } a \in \coloring(X\cap Y),
            \text{ where $X,Y\in\Facet(\cplC)$ and $a\in\AgentSet$, and}        
            \]
        \item $L(X)= \{\Pinput{a}{v} \mid (a,v)\in X\}$.
    \end{itemize}

    We call this induced Kripke model a \keyword{simplicial model}.
    In abuse of notation,
    we also write
    $\cplC$ for the simplicial model $\anglpair{\Facet(\cplC),\sim,L}$.
\end{definition}

Although the entire results of this paper 
can be fully worked out without any help of topological intuition, 
it is instructive to see here the topological implication
of epistemic formulas.
Figure~\ref{fig:simplModel} illustrates a complex $\cplC$ of dimension 2,
from which a simplicial model is induced. 
The complex consists of five facets $X_1$, $X_2$, $X_3$, $X_4$, and $X_5$, 
which are the possible worlds of the induced Kripke model.
The three different agents (processes) are distinguished by colors $0$, $1$, and $2$,
and in the figure each vertex colored $0$ (resp., $1$ and $2$)
is depicted in white (resp., blue and red). A vertex denoted by
$w_i$ (resp., $b_i$ and $r_i$) indicates that the
value $i$ is held by the vertex whose color is white (resp., blue and red). 
We assume $L(X)= \{ \Pinput{0}{i}, \Pinput{1}{j}, \Pinput{2}{k} \}$
if and only if $X$ is a facet comprising of vertexes $w_{i}$, $b_{j}$, and $r_{k}$.
For example, $L(X_3)= \{ \Pinput{0}{0}, \Pinput{1}{3}, \Pinput{2}{2} \}$.

In the induced model, it holds that 
$\cplC, X_1 \models \ModK{0} \bigl(\bigvee_{a\in\colorSet} \Pinput{a}{1}\bigr)$,
because there are only two facets related with $X_1$ by  
$\sim_0$ via the vertex $w_2$, 
namely $X_1$ itself and $X_2$, and $\bigvee_{a\in\colorSet} \Pinput{a}{1}$ holds
in both facets. On the other hand,
$\cplC, X_5 \models \ModK{2} \bigl(\bigvee_{a\in\colorSet} \Pinput{a}{1}\bigr)$
does \emph{not} hold, because $X_5 \sim_2 X_3$
but the value $1$ is not held by none of vertexes of $X_3$.
As for distributed knowledge, 
since $\sim_{\ModD{\{0,1\}}}$ relates $X_3$ with solely itself
and $\sim_{\ModD{\{0,2\}}}$ further relates
$X_3$ with $X_4$, 
we have $\cplC,X_3\models \ModD{\{0,1\}} \Pinput{1}{3}$
but $\cplC,X_3\not\models \ModD{\{0,2\}} \Pinput{1}{3}$
because $\cplC,X_4 \not\models \Pinput{1}{3}$.
An assertion on common knowledge $\cplC, X_5\models 
\ModC{\colorSet} \bigl(\bigvee_{a\in\colorSet} \Pinput{a}{2}\bigr)$ holds,
because $\relCls{}{\colorSet}$ relates $X_5$ with 
those facets that belong to the connected component of $X_5$, 
namely all the facets in $\cplC$, and
$\bigvee_{a\in\colorSet} \Pinput{a}{2}$ holds for every facet.
% because the facet $X_3$ is disconnected,
% with respect to the reflexive transitive relation $\relCls{\cplC}{\colorSet}$, to those facets
% $X_1$ and $X_2$, in which $\bigvee_{a\in\colorSet} \Pinput{a}{2}$ is falsified.

\begin{figure}[t]
    \centering
    \includegraphics[scale=0.33]{./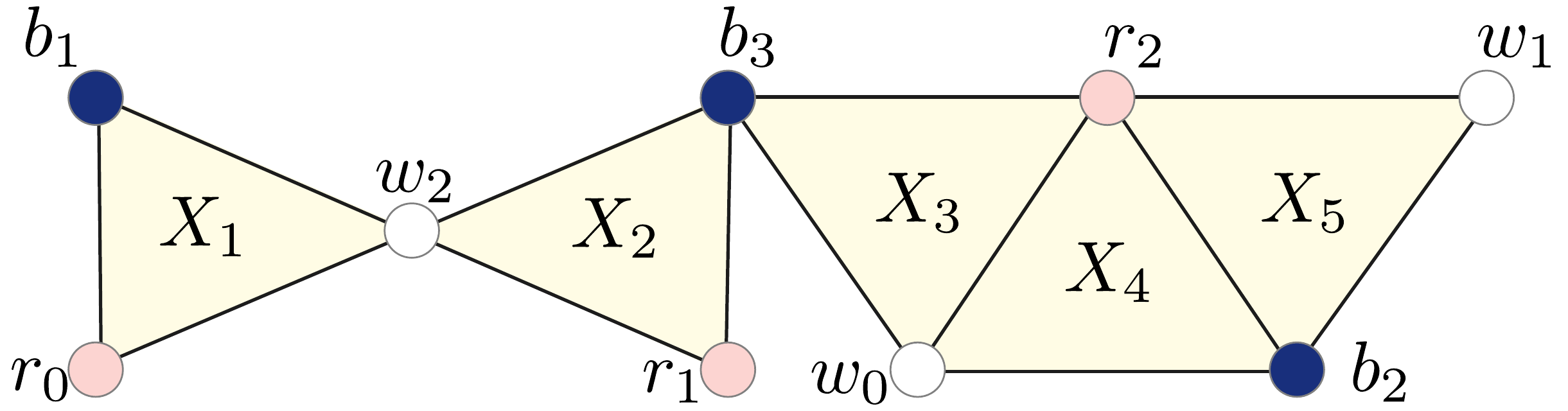}
    \caption{A complex $\cplC$ of dimension 2 that induces
        a simplicial model $\anglpair{\Facet(\cplC),\sim,L}$}
    \label{fig:simplModel}
\end{figure}

As for positive epistemic formulas,
it can be shown that the epistemic knowledge never increases along with 
an appropriate simplicial map.
\begin{theorem}[Knowledge gain\cite{GoubaultLedentRajsbaum:GandALF18}] \label{th:knowledgegain}
    Suppose we are given a pair of simplicial models 
    $\anglpair{\Facet(\cplC),\sim^{\cplC},L^{\cplC}}$ and
    $\anglpair{\Facet(\cplD),\sim^{\cplD},L^{\cplD}}$ that are
    induced from complexes $\cplC$ and $\cplD$, respectively. 
    We call a function $\delta:\cplC\to\cplD$ 
    a \keyword{morphism}, if it  
    is a color-preserving simplicial map from $\VertSet(\cplC)$ to 
    $\VertSet(\cplD)$ such that
    $L^{\cplD}(\delta(X))=L^{\cplC}(X)$ for every $X\in\Facet(\cplC)$.
  
    For any morphism $\delta:\cplC\to\cplD$ and any positive formula $\varphi$,
    $\cplD,\delta(X)\models\varphi$ implies $\cplC,X\models\varphi$, 
    for every facet $X\in \Facet(\cplC)$. 
\end{theorem}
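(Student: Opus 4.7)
The plan is to prove the claim by structural induction on the positive formula $\varphi$, exploiting the fact that a morphism $\delta$ preserves colors, labels, and (in one direction) all three indistinguishability relations.

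For the base case I would handle atomic propositions directly: $\cplD,\delta(X)\models \Pinput{a}{v}$ iff $\Pinput{a}{v}\in L^{\cplD}(\delta(X))$, which by the morphism condition $L^{\cplD}(\delta(X))=L^{\cplC}(X)$ is equivalent to $\cplC,X\models \Pinput{a}{v}$. For the negation case, positivity means that any $\neg\psi$ has $\psi$ built only from atomic propositions and Boolean connectives, so by a short subinduction on $\psi$ the label equality $L^{\cplD}(\delta(X))=L^{\cplC}(X)$ gives the full biconditional $\cplD,\delta(X)\models\psi\Leftrightarrow\cplC,X\models\psi$, which immediately yields the $\neg\psi$ case. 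Conjunction and disjunction are routine propagations of the induction hypothesis.

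The heart of the argument is the three modal cases, each of which rests on the fact that $X\sim_{a} Y$ in $\cplC$ forces $\delta(X)\sim_{a}\delta(Y)$ in $\cplD$. Concretely, $X\sim_{a}Y$ means $a\in\coloring(X\cap Y)$, so there is a vertex $v\in X\cap Y$ with $\coloring(v)=a$; since $\delta$ is color-preserving, $\delta(v)\in\delta(X)\cap\delta(Y)$ is again of color $a$, giving $\delta(X)\sim_{a}\delta(Y)$. For $\ModK{a}\psi$, suppose $\cplD,\delta(X)\models\ModK{a}\psi$ and pick any $Y\sim_{a}X$; then $\delta(Y)\sim_{a}\delta(X)$, so $\cplD,\delta(Y)\models\psi$ and the induction hypothesis gives $\cplC,Y\models\psi$. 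For $\ModC{A}\psi$, any chain $X=X_{0}\sim_{a_{0}}\cdots\sim_{a_{n}}X_{n}=Y$ in $\cplC$ is mapped by $\delta$ to a chain of the same form in $\cplD$, so $\delta(X)\relCls{}{A}\delta(Y)$, and the induction hypothesis concludes. For $\ModD{A}\psi$, if $X\sim_{\ModD{A}}Y$ then every $a\in A$ lies in $\coloring(X\cap Y)$; each such color vertex is carried by $\delta$ into $\delta(X)\cap\delta(Y)$, whence $\delta(X)\sim_{\ModD{A}}\delta(Y)$ and the induction hypothesis finishes the case.

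The step I expect to be subtlest is the negation case, since one must explicitly notice that the positivity restriction is exactly what blocks the need for a reverse-direction transport of indistinguishability: if modal operators were allowed under negation, the proof would break down because a morphism can collapse worlds and thereby create new $\sim_{a}$-related facets in the image, so knowledge can genuinely increase. All the modal cases only require the easy forward direction $X\sim_{a}Y\Rightarrow\delta(X)\sim_{a}\delta(Y)$, and in particular one does \emph{not} need the stronger property $\delta(X\cap Y)=\delta(X)\cap\delta(Y)$ here; the inclusion $\delta(X\cap Y)\subseteq\delta(X)\cap\delta(Y)$, which holds for any function on vertexes, suffices for all three modalities.
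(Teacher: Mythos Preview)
Your proof is correct and follows essentially the same route as the paper's: structural induction on $\varphi$, with the modal cases reduced to showing that $\delta$ pushes each indistinguishability relation forward from $\cplC$ to $\cplD$. Your remark that only the trivial inclusion $\delta(X\cap Y)\subseteq\delta(X)\cap\delta(Y)$ is needed is in fact a slight sharpening of the paper's argument, which invokes the stated equality $\delta(X\cap Y)=\delta(X)\cap\delta(Y)$ in the $\ModD{A}$ case.
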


\begin{proof}
    Proof is by induction on $\varphi$. We only examine a few cases that
    have not been demonstrated in \cite{GoubaultLedentRajsbaum:GandALF18}.

    To prove the case of disjunction $\varphi\vee\psi$, 
    suppose $\cplD, \delta(X)\models \varphi\vee\psi$.
    Without loss of generality, 
    we may assume $\cplD, \delta(X)\models \varphi$. By induction hypothesis, 
    $\cplC, X\models \varphi$ and  thus we have  $\cplC, X\models \varphi\vee\psi$.

    For the case of distributed knowledge operator,
    suppose $\cplD, \delta(X)\models \ModD{A}\varphi$.
    In order to show  $\cplC, X \models \ModD{A}\varphi$, 
    we assume $Y\in\Facet(\cplC)$ is an arbitrary facet 
    such that $X \sim_{\ModD{A}} Y$, i.e., $A\subseteq \coloring(X\cap Y)$,
    and show $\cplC, Y \models \varphi$.
    Since $\delta$ is a color-preserving simplicial map,
    $A\subseteq\coloring(X\cap Y)=\coloring(\delta(X\cap Y))=\coloring(\delta(X)\cap\delta(Y))$, which means
    $\delta(X)\sim_{\ModD{A}}\delta(Y)$. By the assumption $\cplD, \delta(X)\models \ModD{A}\varphi$, we have
    $\cplD,\delta(Y)\models \varphi$; hence $\cplC,Y\models \varphi$
    by induction hypothesis.
    % Let us define $Z= \{v\mid v \in X\cap Y, \coloring(v)\in A\}$. 
    % Then $\delta(Z)\subseteq \delta(X\cap Y) =\delta(X)\cap\delta(Y)$ and hence
    % $A=\coloring(\delta(Z))\subseteq \coloring(\delta(X))\cap\coloring(\delta(Y))$, which means
    % $\delta(X)\sim_{\ModD{A}}\delta(Y)$. By the assumption, we have
    % $\cplD,\delta(Y)\models \varphi$; hence $\cplC,Y\models \varphi$
    % by induction hypothesis.
\end{proof}

From a topological perspective, 
it can be understood that 
the knowledge gain theorem stems from the fact
that, for any morphism $\delta$, 
a positive epistemic formula is less likely to hold in 
$\delta(\cplD)$, which is a more densely connected model than $\cplC$.
Thus we can make use of positive epistemic formulas for detecting 
discrepancy in connectivity of 
simplicial models, as we will demonstrate in later sections.

\subsubsection{The action model and product update}
\label{subsubsec:actModelProdUpdate}

The dynamic epistemic logic (DEL) \cite{DitmarschHoekKooi:DELbook08} is an
epistemic logic with possible updates in the knowledge model.
Following \cite{GoubaultLedentRajsbaum:GandALF18}, 
we make use of \keyword{product update model} generated by
epistemic actions \cite{BaltagMossSolecki16,DitmarschHoekKooi:DELbook08} for modeling 
the change in the knowledge structure incurred by 
an action of information exchange among communicating processes.

% \begin{definition}[Action model and product update] \label{def:actionmodel}
%     An \keyword{action model} $\actMf{C}=\anglpair{T,\sim^{\actMf{T}},\amPre{}}$
%     is a structure such that $T$ is a set of \keyword{action points},
%     $\sim^T$ is a family of equivalence relations $\{\sim^{T}_a\}_{a\in\AgentSet}$,
%     and $\amPre{}$ is a mapping that assigns
%     an epistemic formula $\amPre{}(t)$, called a \keyword{precondition},
%     for each action $t\in U$. 

%     The \keyword{product update model} of 
%     an initial Kripke model $\kripkeMf{M}=\anglpair{S,\sim^{S},L}$
%     up to an action model $\actMf{C}=\anglpair{T,\sim^{T},\amPre{}}$
%     is given by $\kripkeMf{M}[\actMf{C}]=\anglpair{S[T],\sim,L'}$,
%     where $S[T] =\{(X,t)\in S\times U\mid \kripkeMf{M},X\models \amPre{}(t)\}$,
%     $\sim$ is a family of equivalence relations $\{\sim_a\}_{a\in\AgentSet}$
%     such that $(X,t)\sim_a (X',t')$ iff $X\sim^{S}X'$
%     and $t\sim_a^{U}t'$, and $L'(X) = L(X)$.
% \end{definition}

To define product update model, let us first 
define the cartesian product of simplicial complexes.
Suppose $\anglpair{\cplC,\coloring_{\cplC}}$ 
and $\anglpair{\cplD,\coloring_{\cplD}}$ are complexes.
For each pair of simplexes $X\in \cplC$ and $Y\in \cplD$
such that $\coloring_{\cplC}(X)=\coloring_{\cplD}(Y)$, 
we define cartesian product $X\times Y$ as a simplex whose each vertex
is a pair of vertexes of matching color from original simplexes, namely,
\begin{align*}
    X\times Y & {} = \{ 
        (u,v) \mid u\in X, v\in Y, \coloring_{\cplC}(u)=\coloring_{\cplD}(v)
        \}. 
\end{align*}
The coloring on vertexes in the cartesian product inherits that on 
the original ones, namely,
$\coloring((u,v))=\coloring_{\cplC}(u)$.
% where each vertex $(u,v)$ inherits the same color from the original vertexes $u$ and $v$.
% Thus $X\times Y$ has the same set of colors (and hence the same dimension)
% as the original simplexes.
The cartesian product $\cplC\times\cplD$ of complexes
of $\cplC$ and $\cplD$ is then defined by
$\Facet(\cplC\times\cplD)=
\Facet(\cplC)\times \Facet(\cplD)=
\{X\times Y \mid X\in\Facet(\cplC), Y\in\Facet(\cplD) \}$.
% which is naturally identified with $\Facet(\cplC)\times \Facet(\cplD)$.
%
% $\cplC\times\cplD =
% \anglpair{\Facet(\cplC)\times \Facet(\cplD), \coloring}$,
% where the facets of the product complex is given by
% $\Facet(\cplC)\times \Facet(\cplD) =
% \{X\times Y \mid X\in\Facet(\cplC), Y\in\Facet(\cplD) \}$
% and the color of each vertex $(u,v)$ of the cartesian product
% is inherited from the original ones, namely,
% $\coloring((u,v))=\coloring_{\cplC}(u)$.

We can define the pair of projection maps 
$\proj_{\cplC}: \cplC\times \cplD \to \cplC$ and 
$\proj_{\cplD}: \cplC\times \cplD \to \cplD$ 
by $\proj_{\cplC}((u,v))= u$ and
$\proj_{\cplD}((u,v))=v$ for every $(u,v)\in \VertSet(\cplC\times\cplD)$,
respectively.
They are both color-preserving simplicial maps. 

\begin{definition}[Simplicial action model and product update] 
    \label{def:smplActionModel}
    A \keyword{simplicial action model} 
    $\actMf{D}=\anglpair{\Facet(\cplD),\sim^{\cplD},\amPre{}}$
    is an action model, where the possible worlds are facets of some complex $\cplD$
    and $\sim^{\cplD}$ is the family of relations $\{\sim^{\cplD}_a\}_{a\in\AgentSet}$
    induced from $\cplD$, i.e., $X \sim^{\cplD}_a Y$ iff 
    $a\in \coloring(X\cap Y)$, where $a\in\AgentSet$
    and $X,Y\in \Facet(\cplD)$,
    and $\amPre{}$ is a function that assigns 
    an epistemic logic formula, called a \keyword{precondition}, 
    to each facet $X$ of $\cplD$.

    The \keyword{product update model} of 
    an initial simplicial model $\smplMf{C}=\anglpair{\cplC,\sim^{\cplC},L}$
    by an action model $\actMf{D}=\anglpair{\Facet(\cplD),\sim^{\cplD},\amPre{}}$
    is a simplicial model $\smplMf{C}[\actMf{D}]=\anglpair{\smplMf{C}[\actMf{D}],\sim,L'}$,
    where  
    $\anglpair{\smplMf{C}[\actMf{D}],\sim}$
    is the Kripke frame induced from 
    the complex
    $\smplMf{C}[\actMf{D}] 
    =\{Z\in \cplC\times\cplD
    \mid \smplMf{C},\proj_{\cplC}(Z)\models \amPre{}(\proj_{\cplD}(Z))\}$
    and $L'(Z) = L(\proj_{\cplC}(Z))$
    for every $Z\in \cplC\times\cplD$.
\end{definition}

The product update model provides an alternative way for specifying 
distributed tasks. In the topological setting, a task is modeled by 
a carrier map $\cplC \to 2^\cplD$, which 
maps each input simplex in $\cplC$ to a subcomplex $\cplD$ consisting of 
possible output simplexes \cite{Herlihy:DCTopology}. 
The product update model $\smplMf{C}[\actMf{D}]$ encodes the carrier map
as a subset of the cartesian product $\cplC \times \cplD$, 
where $\smplMf{C}, X\models \amPre{}(Y)$
determines how an input simplex $X\in\cplC$ maps to an output simplex $Y\in\cplD$.

\subsection{Task solvability in DEL semantics}
\label{subsec:taskDEL}

Let us consider the task solvability for distributed system of
$n+1$ asynchronous processes, whose colors (process ids) are given 
by the set $\colorSet=\{0,1,\dots,n\}$.
We assume that, at the initial configuration, 
every process is given an arbitrary input taken from a finite 
set $\mathit{inp}$ of values
and that the system solves a task by means of a particular
protocol, a certain distributed procedure for 
exchanging values among processes, so that each process in the system
decides the final output that conforms to the requirement by the task. 

The \keyword{initial simplicial model} 
$\anglpair{\smplInput{\mathit{inp}},\sim^{\smplInput{\mathit{inp}}},L}$
is the simplicial model induced from
the complex $\anglpair{\smplInput{\mathit{inp}},\coloring_{\smplInput{\mathit{inp}}}}$
such that 
\begin{itemize}
    \item The facets of 
    $\anglpair{\smplInput{\mathit{inp}},\coloring_{\smplInput{\mathit{inp}}}}$
    comprises of simplexes 
    $\{ (0,v_0), (1,v_1), \ldots, (n,v_n)\}$
    of dimension $n$, where $v_0,v_1,\ldots,v_n\in \mathit{inp}$ 
    and the color of each vertex $(a,v)$ is determined by
    $\coloring_{\smplInput{\mathit{inp}}}((a,v))=a$ and
    \item $L(X) = \{ \Pinput{a}{v} \mid (a,v)\in X\}$ 
    for every facet $X$ of $\smplInput{\mathit{inp}}$.
\end{itemize} 

A protocol is modeled by a simplicial action model 
$\anglpair{\cplC,\sim^{\cplC},\amPre{\cplC}}$, called a \keyword{communicative action model},
and a task is modeled by a simplicial action model  
$\anglpair{\actMf{T},\sim^{\actMf{T}},\amPre{\actMf{T}}}$,
where $\cplC$ (resp., $\actMf{T}$) models 
the possible results of the execution of the protocol
(resp., the possible outputs of the task)
with $\amPre{\cplC}$ (resp., $\amPre{\actMf{T}}$) 
relating facets of $\smplInput{\mathit{inp}}$
with those of $\cplC$ (resp., $\actMf{T}$) appropriately.

\begin{definition}[task solvability]\label{def:taskSolvability}
    Let $\smplInput{\mathit{inp}}$ be an initial simplicial model and 
    $\cplC$ and $\actMf{T}$ be action models for a protocol and a task, respectively,
    as defined above. 
    Then a task $\smplInput{\mathit{inp}}[\actMf{T}]$ is solvable by 
    $\smplInput{\mathit{inp}}[\cplC]$, if 
    there exists a morphism $\delta:\smplInput{\mathit{inp}}[\cplC]
    \to \smplInput{\mathit{inp}}[\actMf{T}]$
    such that $\proj_{\smplInput{\mathit{inp}}}
    = \proj_{\smplInput{\mathit{inp}}}\circ \delta$.
\end{definition}

This definition of task solvability by product update 
conforms to the topological definition based on carrier maps \cite{GoubaultLedentRajsbaum:GandALF18}.
(The cartesian product of complexes is indeed a categorical product,
whose universality is equivalent to the topological definition. 
\cite{GoubaultRajsbaum:arXiv17b})

% \begin{theorem}[logical obstruction]
%     A task $\complexf{T}$ for $\cplI$ is not solvable in $\complexf{P}$
%     if there exists a positive formula $\varphi$
%     such that any facet of $\cplI[\complexf{T}]$ satisfies $\varphi$
%     but some facet of $\cplI[\complexf{P}]$ does not.
% \end{theorem}

% \begin{proof}
%     Suppose that there exists such a positive formula $\varphi$.
%     Then existence of a morphism $\delta\colon\func{\cplI[\complexf{P}]}{\cplI[\complexf{T}]}$ derives a contradiction,
%     and so $\complexf{T}$ is not solvable in $\complexf{P}$.
%     To see this, pick an existing facet $X$ of $\cplI[\complexf{P}]$ with $\cplI[\complexf{P}],X\nvDash\varphi$.
%     Since $\cplI[\complexf{T}],\delta(X)\models\varphi$ and $\varphi$ is a positive formula,
%     from the knowledge gain theorem it follows that $X\models\varphi$, a contradiction.
% \end{proof}

%%% Nishimura: added uniform comm. action model
In the rest of this paper, 
we are solely concerned with \keyword{uniform} communicative action models,
as for product update models of protocols. 
A communicative action model 
$\anglpair{\cplC,\sim^{\cplC},\amPre{\cplC}}$ is
called uniform, 
% in a product update model $\smplInput{\mathit{inp}}[\cplC]$ 
if each of its action point (i.e., a facet of $\cplC$) 
is uniquely denoted by $\actionRep{X}{r}$, which is
a pair of facet $X\in\Facet(\smplInput{\mathit{inp}})$
and an index $r\in J$, where $J$ is a fixed finite index set.
The precondition is defined by 
$\amPre{\cplC}(\actionRep{X}{r})= \bigwedge_{a\in\colorSet}\Pinput{a}{v_a}$
for each $X= \{(0,v_0),\ldots,(n,v_n)\}$.

The product update model 
$\smplInput{\mathit{inp}}[\cplC]$
with a uniform communicative action $\cplC$ 
is intended to model a protocol 
that produces a uniform output for each input, independent to 
the initial input values:  
A pair of facets $X$ and $Y$ generate
output subcomplexes $\{\actionRep{X}{r}\mid r\in J\}$ 
and $\{\actionRep{Y}{r}\mid r\in J\}$, respectively, 
of different output values but of isomorphic combinatorial structure.
% For each input configuration given 
% by a facet $X$, the set of possible outputs are modeled by 
% the class of facets $\{\actionRep{X}{r}\mid r\in J\}$
% for a fixed index set $J$, where 
% every output facet $\actionRep{X}{r}$ has an isomorphic combinatorial structure
% with possibly different output valuation.
In this way, 
every facet of the product update model $\smplInput{\mathit{inp}}[\cplC]$
is given as a product $X\times \actionRep{X}{r}$. 
For brevity, we write $\actionRep{X}{r}$ to 
denote a facet of $\smplInput{\mathit{inp}}[\cplC]$,
suppressing the duplicate occurrence of $X$.

% !TEX root = ./00main.tex

\section{Logical Obstruction to Wait-free Binary Consensus Task}
\label{sec:obstruction}

By the task solvability definition in DEL (Definition~\ref{def:taskSolvability})
and the knowledge gain theorem (Theorem~\ref{th:knowledgegain}), 
we obtain a logical method for refuting the solvability
of distributed task.
\begin{theorem}[\cite{GoubaultLedentRajsbaum:GandALF18}]\label{th:obstruction}
    Let $\smplMf{I}$ be an input simplicial model,     
    $\actMf{C}$ be a communicative action model for a protocol, and
    $\actMf{T}$ be a task action model.
    Then, the task $\smplMf{I}[\actMf{T}]$ is 
    not solvable by the protocol $\smplMf{I}[\actMf{C}]$ 
    if there exists a positive epistemic logic formula $\varphi$ such that
    $\smplMf{I}[\actMf{T}] \valid \varphi$ but 
    $\smplMf{I}[\actMf{C}]\nvalid \varphi$;  That is,
    $\varphi$ holds in every world
    of $\smplMf{I}[\actMf{T}]$ but
    $\varphi$ is falsified for some world in $\smplMf{I}[\actMf{C}]$.
\end{theorem}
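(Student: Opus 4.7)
The plan is to prove the contrapositive: assume the task $\smplMf{I}[\actMf{T}]$ is solvable by $\smplMf{I}[\actMf{C}]$, and deduce that no positive formula $\varphi$ can separate the two models as described.

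First, I would unfold the solvability hypothesis using Definition~\ref{def:taskSolvability}: there exists a morphism $\delta:\smplMf{I}[\actMf{C}]\to\smplMf{I}[\actMf{T}]$ with $\proj_{\smplMf{I}}=\proj_{\smplMf{I}}\circ\delta$. The projection condition is not actually needed for the obstruction argument; all that matters is that $\delta$ is a morphism between the two simplicial models in the sense of Theorem~\ref{th:knowledgegain}. So the next step would be to check that this $\delta$ really qualifies as a morphism, namely that it is a color-preserving simplicial map which also preserves the labeling $L$. Color-preservation and simpliciality are built into the definition of morphism, so the only thing to verify is that $L^{\smplMf{I}[\actMf{T}]}(\delta(X))=L^{\smplMf{I}[\actMf{C}]}(X)$ for every facet $X\in\Facet(\smplMf{I}[\actMf{C}])$; this is immediate from the definition of the product update labeling $L'(Z)=L(\proj_{\smplMf{I}}(Z))$ in Definition~\ref{def:smplActionModel} together with $\proj_{\smplMf{I}}=\proj_{\smplMf{I}}\circ\delta$.

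Now suppose, toward contradiction, that a positive formula $\varphi$ satisfies $\smplMf{I}[\actMf{T}]\valid\varphi$ and $\smplMf{I}[\actMf{C}]\nvalid\varphi$. By invalidity, pick a facet $X\in\Facet(\smplMf{I}[\actMf{C}])$ with $\smplMf{I}[\actMf{C}],X\nvalid\varphi$. Since $\delta(X)$ is a facet of $\smplMf{I}[\actMf{T}]$ and $\varphi$ is valid there, we have $\smplMf{I}[\actMf{T}],\delta(X)\valid\varphi$. Applying the knowledge gain theorem (Theorem~\ref{th:knowledgegain}) to the morphism $\delta$ and the positive formula $\varphi$, we conclude $\smplMf{I}[\actMf{C}],X\valid\varphi$, contradicting the choice of $X$.

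There is no real obstacle here: the whole statement is a one-line corollary of Theorem~\ref{th:knowledgegain} once one observes that a solving morphism respects labels. The only minor subtlety worth flagging in the write-up is the distinction between $\valid$ (validity at every world) and pointwise satisfaction at $\delta(X)$; the former trivially entails the latter, which is what we actually feed into the knowledge gain theorem. Positivity of $\varphi$ is essential precisely because the knowledge gain theorem only runs in one direction for formulas with no modality appearing under negation.
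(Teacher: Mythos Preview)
Your proof is correct and matches the paper's approach: the paper does not spell out a proof but simply states (just before the theorem) that the result follows from Definition~\ref{def:taskSolvability} and Theorem~\ref{th:knowledgegain}, which is exactly the contrapositive-plus-knowledge-gain argument you give. One small redundancy: since Definition~\ref{def:taskSolvability} already stipulates that $\delta$ is a morphism in the sense of Theorem~\ref{th:knowledgegain} (hence $L$-preserving), your separate verification of $L$-preservation via the projection identity is unnecessary, though harmless.
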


The positive epistemic formula $\varphi$ is called \keyword{logical obstruction}
to the solvability of the task $\smplMf{I}[\actMf{T}]$ 
for the protocol $\smplMf{I}[\actMf{C}]$. 

As a demonstration of impossibility proof by means of logical obstruction, 
we examine the wait-free distributed binary consensus task 
(i.e., 1-set agreement task with exactly 2 possible initial input values).  
The well-known fact that the binary consensus task is not wait-free solvable 
in the read-write shared-memory model 
with two or more processes has been shown by varying methods 
such as valency argument \cite{FischerLynchPaterson85,Herlihy91}
and combinatorial topology \cite{HerlihyShavit99,AttiyaRajsbaum02,Herlihy:DCTopology}.

Here we demonstrate that this well-known result is also reproduced 
in the logical framework of DEL.
As Goubault et al.\ \cite{GoubaultLedentRajsbaum:GandALF18}
have already shown, the logical obstruction to the binary consensus task 
can be argued based on a topological intuition: 
The input to the task has a connected combinatorial structure, 
but the output does not.
Here we give a concrete logical obstruction formula in the formal language of
epistemic logic and thereby 
demonstrate how a purely combinatorial argument works, 
without recourse to topological infrastructure.

We assume $\colorSet = \{0,1,\ldots,n\}$ ($n\geq 1$)
and the set of binary inputs is $\{0,1\}$.
To show the unsolvability of binary consensus task in the wait-free 
read-write shared memory model, 
we consider the full-information protocol of the execution model, namely 
the immediate snapshot protocol \cite{BorowskyGafni:PODC93,BorowskyGafni:STOC93}. 
In the immediate snapshot protocol, 
the $n+1$ processes are arbitrarily divided into an \keyword{ordered set partition}
$S_1|S_2|\cdots|S_m$, which is an ordered sequence of 
nonempty, pairwise disjoint subsets $S_i$'s
such that $\colorSet=\bigcup_{i=1}^m S_i$. 
Each partition $S_i$ represents a concurrency class and  
the snapshots are taken per each partition one after another in the order,
where every process in the same concurrency class $S_i$ simultaneously writes its local value 
to the shared memory and then collects the \emph{view}, i.e., 
the set of the values that have been written so far. 
As a sequel, every process that belongs to a concurrency class $S_k$
obtains the view  $\{(a,v_a)\mid  a\in \bigcup_{i=1}^k S_i\}$,
where $v_a$ is the initial input value to process $a$.

\begin{figure}
    \centering

    \includegraphics[scale=.33]{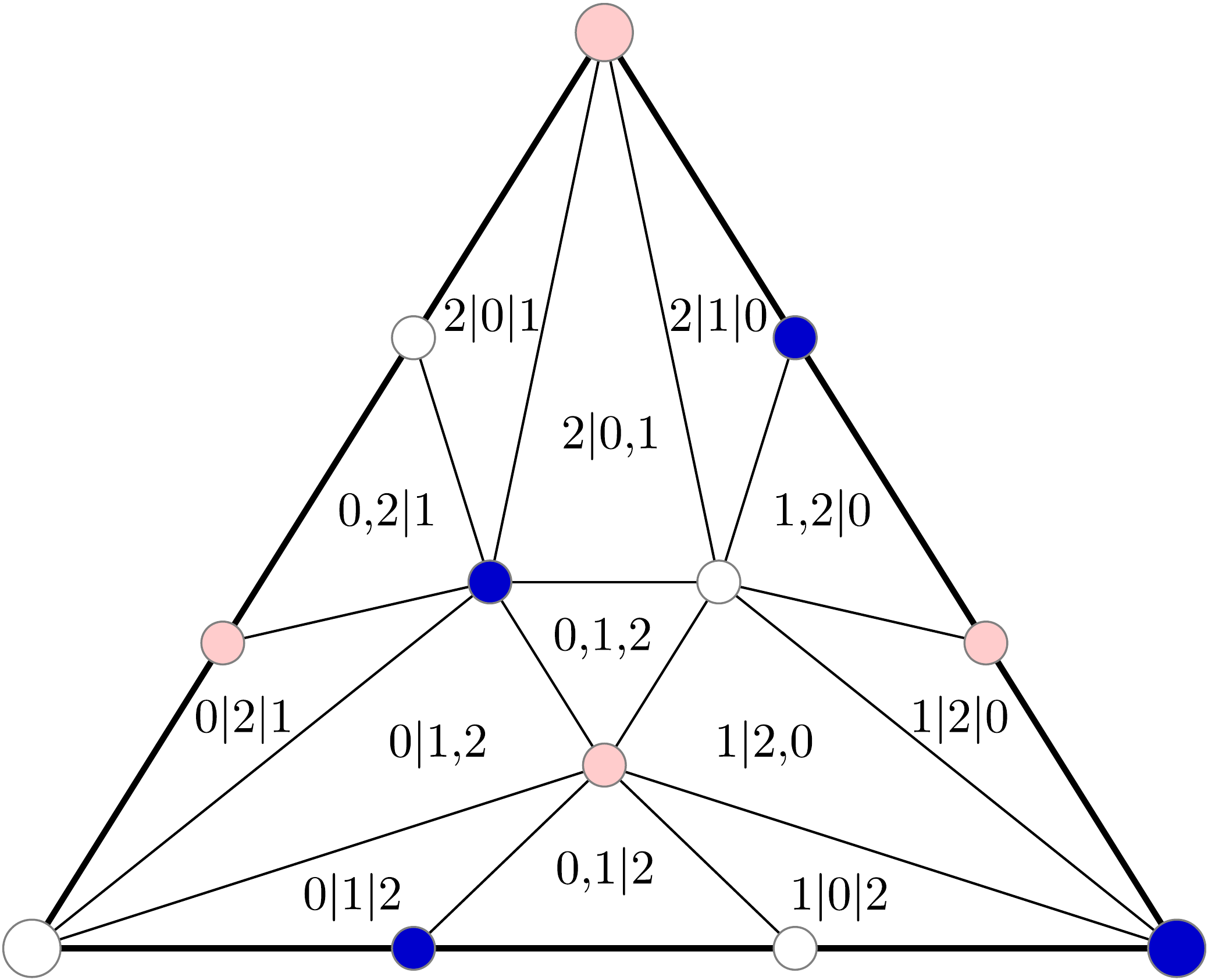}

    \caption{Standard chromatic subdivision for a simplex of dimension 2 ($n=3$),
    with 
    each facet being indexed by an ordered set partition}
    \label{fig:orderedSetPartition}
\end{figure}

It is well-known that a single round execution of the immediate snapshot protocol 
corresponds to a subdivision of the input complex, a.k.a.\ 
the standard chromatic subdivision \cite{HerlihyShavit99,Kozlov12}.
(See Figure~\ref{fig:orderedSetPartition} for a standard chromatic subdivision
and its correspondence to ordered set partitions.)
We write $\Chromatic\,\cplC$ for the standard chromatic subdivision of a complex
$\anglpair{\cplC,\coloring}$. 
We designate a vertex of $\Chromatic\,\cplC$ by a pair $(a,\mathit{view}_a)$, where 
$a$ is the color of the vertex and $\mathit{view}_a$ is the snapshot view of the process $a$.
A facet of $\Chromatic\,\cplC$ is a $\{(a,\mathit{view}_a) \mid a\in\colorSet\}$,
where the snapshot views $\mathit{view}_a$
are derived from a single common ordered set partition. 

The immediate snapshot protocol is modeled 
by the product update  $\smplInput{\{0,1\}}[\amIS]$, 
where $\langle\Facet(\amIS),\sim^{\amIS},$
$\amPre{\amIS}\rangle$
is the uniform communicative action model such that:
\begin{itemize}
    \item The action points are facets of the standard chromatic subdivision
    $\amIS=\Chromatic\,\smplInput{\{0,1\}}$, where     
    we denote each facet by $\actionIS{X}{S_1|S_2|\cdots|S_m}$
    with $X\in \Facet(\smplInput{\{0,1\}})$ and 
    $S_1|S_2|\cdots|S_m$ being an ordered set partition of $\colorSet$;
    % we occasionally denote a facet of the standard chromatic subdivision 
    % by $\actionIS{X}{S_1|S_2|\cdots|S_m}$, rather than by 
    % the bare set of vertexes, as 
    % the facet can be uniquely identified by a pair 
    % of the facet $X\in \smplInput{\{0,1\}}$ from which the subdivision originated
    % and the ordered set partition $S_1|S_2|\cdots|S_m$ of $\colorSet$;
    \item The preconditions are given as in
    the definition of uniform action model in Section~\ref{subsec:taskDEL}.
    % $\amPre{\amIS}(\actionIS{X}{S_1|S_2|\cdots|S_m})
    % = \{ \Pinput{a}{v} \mid v\in X, \coloring^{\smplInput{\{0,1\}}}(v)=a \}$. 
\end{itemize}

In what follows, for each facet $X^{S_1|\cdots|S_m}\in \Facet(\smplInput{\{0,1\}})$,
we write 
$\ViewAssig{X^{S_1|\cdots|S_m}}(a)$ for 
$\bigl\{ v\in X \mid
a\in S_k, \coloring(v)\in \bigcup_{i=1}^k S_i \bigr\}$,
which denotes the corresponding view of process $a$.

% In fact, 
% for any $X\in\Facet(\smplInput{\{0,1\}})$ and 
% $X\in\Facet(\amIS)$,
% the cartesian product $X\times Y$ is isomorphic to $Y$.
% Hence we may write a facet of 
% $\smplInput{\{0,1\}}[\amIS]$ simply by the facet 
% of $\amIS$ for simplicity. 

The binary consensus task is modeled by the product update $\smplInput{\{0,1\}}[\amBCons]$, 
where 
$\langle\Facet(\amBCons),\sim$, $\amPre{\amBCons}\rangle$ is 
the action model such that:
\begin{itemize}
    \item The underlying complex 
    $\amBCons$ consists of solely two facets $\hat{0}=\{(a,0) \mid a\in\colorSet\}$ 
    and $\hat{1} = \{(a,1) \mid a\in\colorSet\}$;

    \item The preconditions are defined by
    $\amPre{\amBCons}(\hat{0})=\bigvee_{a\in\colorSet}\Pinput{a}{0}$ 
    and 
    $\amPre{\amBCons}(\hat{1})=\bigvee_{a\in\colorSet}\Pinput{a}{1}$.    
\end{itemize}

The action model $\amBCons$ defines the binary consensus task,
whose output must be a unanimous agreement to 
one of input values, either $0$ or $1$.

% Now we apply Theorem~\ref{th:obstruction}
% to the binary consensus task. 
\begin{theorem} \label{th:BCimpossibility}
    There is a logical obstruction formula to  
    the binary consensus task $\smplInput{\{0,1\}}[\amBCons]$ 
    for the protocol $\smplInput{\{0,1\}}[\amIS]$.
    This refutes the solvability of 
    the binary consensus task by the single round immediate snapshot protocol.
\end{theorem}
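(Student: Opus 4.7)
The plan is to exhibit the positive epistemic formula
\[
\varphi \;\defeq\; \ModC{\colorSet}\Bigl(\bigvee_{a\in\colorSet}\Pinput{a}{0}\Bigr) \;\vee\; \ModC{\colorSet}\Bigl(\bigvee_{a\in\colorSet}\Pinput{a}{1}\Bigr)
\]
as the logical obstruction and then apply Theorem~\ref{th:obstruction}. The formula contains no negation whatsoever, so it is trivially positive; the work lies entirely in showing $\smplInput{\{0,1\}}[\amBCons]\valid\varphi$ and $\smplInput{\{0,1\}}[\amIS]\nvalid\varphi$.

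For the task side, the main observation is that $\smplInput{\{0,1\}}[\amBCons]$ splits cleanly along the consensus value. Every facet has the form $X\times\hat{i}$ with $i\in\{0,1\}$, and the precondition $\amPre{\amBCons}(\hat{i})=\bigvee_{a}\Pinput{a}{i}$ forces $X$ to contain at least one vertex of input $i$. Since the indistinguishability relation in the product update requires agreement of second coordinates $(a,i)$ and $(a,j)$ at a shared colored vertex, no $\sim_a$-edge can bridge $\hat{0}$- and $\hat{1}$-worlds, so every $\relCls{}{\colorSet}$-reachable world from any $\hat{i}$-world stays within the $\hat{i}$-component. Because every world there satisfies $\bigvee_{a}\Pinput{a}{i}$ by the precondition, we get $\ModC{\colorSet}(\bigvee_{a}\Pinput{a}{i})$ everywhere on the $\hat{i}$-component, hence $\varphi$ everywhere in $\smplInput{\{0,1\}}[\amBCons]$.

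For the protocol side, I would show that $\smplInput{\{0,1\}}[\amIS]$ is $\relCls{}{\colorSet}$-connected, which immediately falsifies both disjuncts of $\varphi$: from any world one can $\relCls{}{\colorSet}$-reach both the all-$0$ input world (where $\bigvee_{a}\Pinput{a}{1}$ fails) and the all-$1$ input world (where $\bigvee_{a}\Pinput{a}{0}$ fails). The key one-step connectivity lemma is: if input facets $X$ and $X'$ differ only in process $i$'s input value and $j\in\colorSet\setminus\{i\}$ (which exists because $n\geq 1$), then with partition $\pi=(\colorSet\setminus\{i\})\mid\{i\}$ the two facets $\actionIS{X}{\pi}$ and $\actionIS{X'}{\pi}$ are $\sim_j$-related, since process $j$'s view in $\pi$ excludes process $i$ and therefore coincides in $X$ and $X'$. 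Flipping bit-by-bit then chains the all-$0$ and the all-$1$ input facets together through any intermediate input.

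The main obstacle I expect is not technical depth but the careful bookkeeping of the product construction: one must verify that the labeling on $\smplInput{\{0,1\}}[\amBCons]$ is inherited from the input component (so that the atomic propositions $\Pinput{a}{v}$ genuinely report inputs), and that it is the second coordinate of vertices inherited from $\amBCons$ that forces the two-component split. Once these combinatorial facts are explicitly unpacked, the proof is precisely the elementary comparison of connected components in two simplicial models, without any appeal to topological machinery, in the spirit advertised in the introduction.
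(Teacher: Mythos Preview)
Your approach is correct and parallels the paper's closely, though with a different obstruction formula. The paper uses the asymmetric
\[
\Psi \;=\; \neg\Bigl(\bigwedge_{a\in\colorSet}\Pinput{a}{0}\Bigr)\ \vee\ \ModC{\colorSet}\Bigl(\bigvee_{a\in\colorSet}\Pinput{a}{0}\Bigr),
\]
while you take the symmetric, negation-free $\varphi$. Both are positive, and the mechanism is the same: on the task side the $\hat{0}$- and $\hat{1}$-components of $\smplInput{\{0,1\}}[\amBCons]$ are disconnected, so common knowledge of ``some input equals the consensus value'' holds; on the protocol side a $\relCls{}{\colorSet}$-path links an all-$0$ world to an all-$1$ world, which kills the relevant common-knowledge disjunct(s). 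Your symmetric formula has the cosmetic advantage of avoiding negation altogether; the paper's version is the $k=1$ specialisation of the inductive family used later for $k$-set agreement.

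There is one small gap in your protocol-side chain. Your bit-flip lemma yields edges $\actionIS{X_k}{\pi_{k}}\sim_j\actionIS{X_{k+1}}{\pi_{k}}$, but consecutive flips use \emph{different} partitions $\pi_k\neq\pi_{k+1}$, so to splice the chain you must also connect $\actionIS{X_{k+1}}{\pi_{k}}$ to $\actionIS{X_{k+1}}{\pi_{k+1}}$ within the subdivision of $X_{k+1}$. This is easy---for any ordered partition $S_1|\cdots|S_m$ and any $a\in S_m$, the facet $\actionIS{X}{S_1|\cdots|S_m}$ shares its $a$-vertex with the single-block facet $\actionIS{X}{\colorSet}$, so every subdivision is $\relCls{}{\colorSet}$-connected through its centre---but it should be stated. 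The paper avoids this bookkeeping by writing down an explicit five-step path $X_1\sim_0 X_2\sim_n X_3\sim_0 X_4\sim_n X_5$ through the single intermediate input $W=\{(0,0),(1,1),\ldots,(n,1)\}$, which absorbs the within-subdivision moves into the concrete witnesses.
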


\begin{proof}
    We will show the following formula works as logical obstruction.
    \begin{align*}
        \Psi = {} &  \neg \Biggl(\bigwedge_{a\in\colorSet}\Pinput{a}{0}\Biggr) 
        \vee \ModC{\Pi}\Biggl(\bigvee_{a\in\colorSet}\Pinput{a}{0}\Biggr)
        % \vee \ModC{\Pi}\Psomep{0}
    \end{align*}
    We are obliged to show
    that $\smplInput{\{0,1\}}[\amBCons]\models \Psi$ 
    but $\smplInput{\{0,1\}}[\amIS]\not\models \Psi$.

    We first show $\smplInput{\{0,1\}}[\amIS]\nvalid \Psi$.
    Let us write 
    $Y= \{(0,0),(1,0),\ldots,(n,0)\}$, 
    $W= \{(0,0),(1,1),\ldots,$
    $(n,1)\}$, and 
    $Z= \{(0,1),(1,1),\ldots,(n,1)\}$, which are the facets of  
    $\smplInput{\{0,1\}}$. 
    It suffices to disprove $\smplInput{\{0,1\}}[\amIS], X_1 \models \Psi$,
    where 
    $X_1=\actionIS{Y}{0|1,\ldots,n}$.    
    Obviously, $X_1 \not\models \neg\bigwedge_{a\in\colorSet}\Pinput{a}{0}$. 
    Let us consider facets
    $X_2=\actionIS{W}{0|1, \ldots, n}$, 
    $X_3=\actionIS{W}{0,1, \ldots, n}$, 
    $X_4=\actionIS{W}{1, \ldots, n|0}$,  and
    $X_5=\actionIS{Z}{1, \ldots, n|0}$, which are related 
    in $\smplInput{\{0,1\}}[\amIS]$ as follows.
    (See Figure~\ref{fig:ISmodel} for the corresponding adjacency of facets, for $n=3$.)
    \begin{itemize}
        \item $X_1 \sim^{\smplInput{\{0,1\}}[\amIS]}_0 X_2$, because $\ViewAssig{X_1}(0)=\ViewAssig{X_2}(0)=\{ (0,0)\}$;
        \item $X_2 \sim^{\smplInput{\{0,1\}}[\amIS]}_n X_3$, because $\ViewAssig{X_2}(n)=\ViewAssig{X_3}(n)=\{ (0,0),(1,1),\ldots,(n,1)\}$;
        \item $X_3 \sim^{\smplInput{\{0,1\}}[\amIS]}_0 X_4$, because $\ViewAssig{X_3}(0)=\ViewAssig{X_4}(0)=\{ (0,0),(1,1),\ldots,(n,1)\}$;
        \item $X_4 \sim^{\smplInput{\{0,1\}}[\amIS]}_n X_5$, because $\ViewAssig{X_4}(0)=\ViewAssig{X_5}(0)=\{ (1,1),\ldots,(n,1)\}$.
    \end{itemize} 
    These entail that $X_1 \relCls{\smplInput{\{0,1\}}[\amIS]}{\colorSet} X_5$. 
    Since $\smplInput{\{0,1\}}[\amIS],X_5 \not\models \bigvee_{a\in\colorSet}\Pinput{a}{0}$, 
    we conclude that $\smplInput{\{0,1\}}[\amIS],$
    $X_1 \not\models \Psi$.

    \begin{figure}[t] 
        \centering
        \includegraphics[scale=.42]{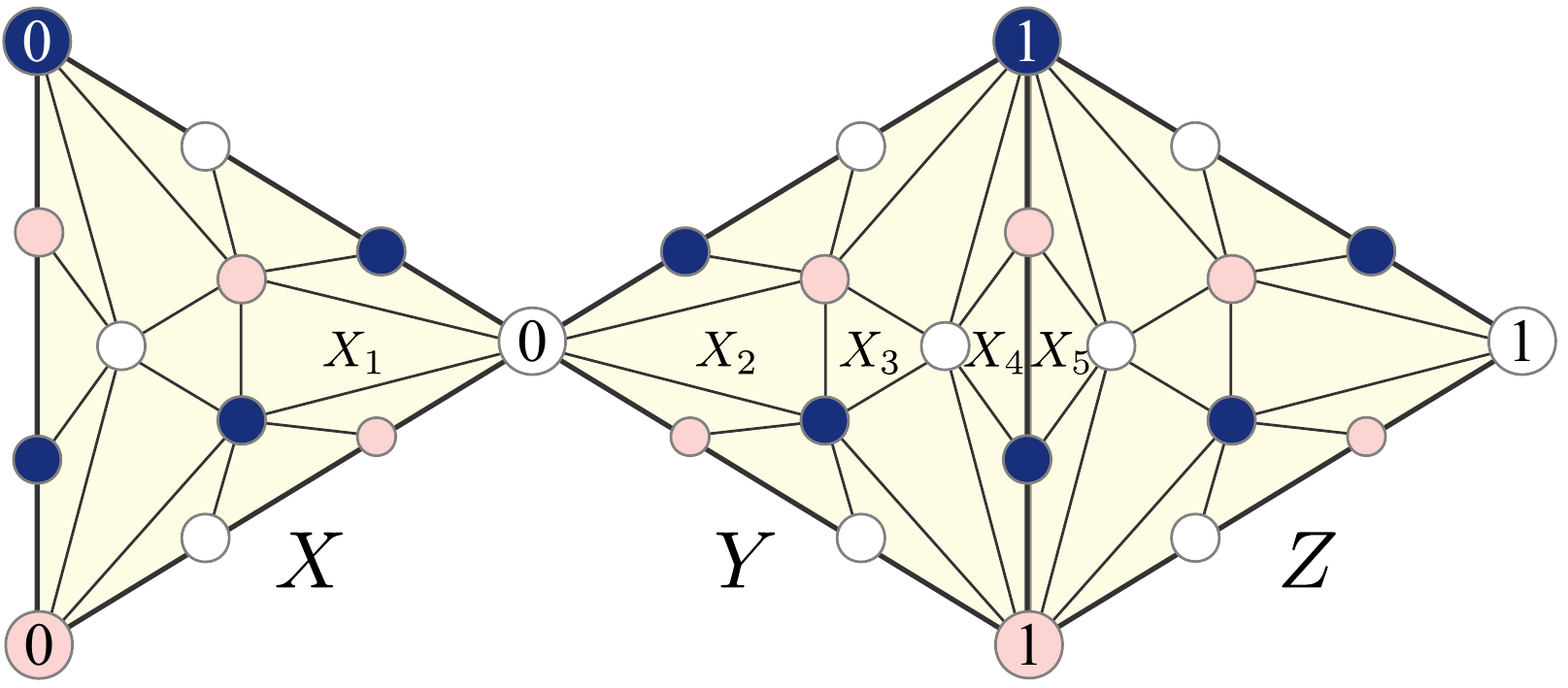}
        \caption{A part of the simplicial model $\smplInput{\{0,1\}}[\amIS]$ for immediate snapshot ($n=3$),
        where $X_i$'s are subdivisions of larger facets $X$, $Y$, and $Z$.}
        \label{fig:ISmodel}
    \end{figure}

    We then prove $\smplInput{\{0,1\}}[\amBCons]\valid \Psi$.     
    Suppose, in contradiction, 
    $\smplInput{\{0,1\}}[\amBCons],X_0 \nvalid \Psi$ for some facet $X_0$ in 
    $\smplInput{\{0,1\}}[\amBCons]$.
    $X_0$ must be a product $Y_0\times \hat{0}$ for some
    $Y_0\in \Facet(\smplInput{\{0,1\}})$,
    because $\amPre{}(\hat{1})=\bigvee_{a\in\colorSet}\Pinput{a}{1}$
    means $\smplInput{\{0,1\}}[\amBCons],X_0 \models \neg \Pinput{a}{0}$ for some $a\in\colorSet$.
    Furthermore, it must be 
    $Y_0= \{(0,0), (1,0), \ldots, (n,0)\}$.
    % $X_0\models\bigwedge_{a\in\colorSet}\Pinput{a}{0}$ that 
    %    
    Since $\smplInput{\{0,1\}}[\amBCons],X_0 \not\models \ModC{\Pi}(\bigvee_{a\in\colorSet}\Pinput{a}{0})$,
    there must exist $X_1, \ldots, X_m \in \smplInput{\{0,1\}}[\amBCons]$
    and $a_1,\ldots,a_m\in \colorSet$
    ($m>0$)
    such that $X_0 \sim^{\smplInput{\{0,1\}}[\amBCons]}_{a_1} X_1 \cdots
    \sim^{\smplInput{\{0,1\}}[\amBCons]}_{a_m} X_m$  and 
    $\smplInput{\{0,1\}}[\amBCons],X_m\not\models \bigvee_{a\in\colorSet}\Pinput{a}{0}$. 
    Hence 
    it must be $X_m=Y_m\times \hat{1}$ and 
    $Y_m = \{(0,1),(1,1),\ldots, (n,1)\}$
    and therefore 
    $X_k=Y_k \times \hat{0}$ and 
    $X_{k+1}= Y_{k+1}\times \hat{1}$ for some $k$. However, 
    $X_k \sim^{\smplInput{\{0,1\}}[\amBCons]}_{a_{k+1}} X_{k+1}$ implies that
    the $X_k$ and $X_{k+1}$ share a common vertex 
    $(v,u) \in X_k \cap X_{k+1}$,
    which leads to a contradiction $0=u=1$.

\end{proof}

% !TEX root = ./00main.tex

\section{Impossibility of General Set Agreement Tasks}
\label{sec:impossibleSA}

This section applies the logical method introduced in previous sections to
set agreement tasks. 
A $k$-set agreement is a distributed computation such that, 
when each process is initially given
an arbitrary input value, processes decide at most $k$
different values taken from the initial inputs. 
As before, we assume $n+1$ processes, whose unique process ids are given 
by the set $\colorSet=\{ 0,1,\ldots,n \}$ of colors.
Without loss of generality, we may assume that the input values 
are taken from $\colorSet=\{ 0,1,\ldots,n \}$, 
renaming the input values appropriately.

The initial simplicial model is then given by $\smplInput{\colorSet}$,
as defined in Section~\ref{subsec:taskDEL}.
Throughout this section, for brevity, we write $\smplInput{}$ for
$\smplInput{\colorSet}$.
The simplicial action model for $k$-set agreement is
given by $\anglpair{F(\amSA{k}),\sim^{\amSA{k}},\amPre{}}$ such that:
\begin{itemize}
    \item the action points are the facets of a complex $\amSA{k}$ that are given by
    \[ 
        F(\amSA{k})= \{ \anglpair{d_0, \ldots, d_n} \mid 
        d_0,\ldots,d_n\in \colorSet, \abs{\{d_0, \ldots, d_n\}}\leq k\},  \]
    where we write $\anglpair{d_0, \ldots, d_n}$ to denote 
    a facet $\{(0,d_0), \ldots, (n,d_n)\}$, whose
    vertexes are colored by $\coloring(a,d)=a$; 
    \item the precondition is given by 
    $\amPre{}(\anglpair{d_0, \ldots, d_n})=\displaystyle
    \bigwedge_{a\in\colorSet}\Bigl(\bigvee_{a'\in\colorSet} \Pinput{a'}{d_a}\Bigr)$.
\end{itemize}
In the following, we may occasionally use 
a vector notation $\vec{d}$ to denote $\anglpair{d_0, \ldots, d_n}$.

The $k$-set agreement task is then modeled by 
the product update $\smplMf{I}[\amSA{k}]$, 
whose worlds are comprised of cartesian products $X\times\anglpair{d_0, \ldots, d_n}$
of facets $X\in\Facet(\smplInput{})$ and 
$\anglpair{d_0, \ldots, d_n}\in\Facet(\amSA{k})$.
Each facet $X\times\anglpair{d_0, \ldots, d_n}$
induces a pair of functions 
$\InputAssig{X\times\anglpair{d_0, \ldots, d_n}}:\colorSet\to\colorSet$
and $\OutputAssig{X\times\anglpair{d_0, \ldots, d_n}}:\colorSet\to\colorSet$
that are defined by 
$\InputAssig{X\times\anglpair{d_0, \ldots, d_n}}(a)=v$ iff $(a,v)\in X$
and also 
$\OutputAssig{X\times\anglpair{d_0, \ldots, d_n}}(a)=d_a$, respectively. 
Then, $X\times\vec{d} \sim_a^{\smplMf{I}[\amSA{k}]}Y\times\vec{d'}$
holds if and only if 
$\InputAssig{X\times\vec{d}}(a)=\InputAssig{Y\times\vec{d'}}(a)$ 
and 
$\OutputAssig{X\times\vec{d}}(a)=\OutputAssig{Y\times\vec{d'}}(a)$. 
Also, $\abs{\OutputAssig{X}(\colorSet)}\leq k$ follows from the definition of 
$\amSA{k}$ and  
$\InputAssig{X\times\vec{d}}(\colorSet)
\supseteq \OutputAssig{X\times\vec{d}}(\colorSet)$
from the precondition.

\subsection{Logical obstruction to wait-free set agreement}
\label{subsec:obstructionWF}

In \cite{Nishida:Msc20}, Nishida presented
inductively defined epistemic formulas as
logical obstruction to the wait-free $k$-set agreement tasks. 
The primary insight in his impossibility proof is that 
each particular execution of a $k$-set agreement task is associated with a 
\keyword{permutation subset}, whose formal definition is given below.
% which is a nonempty subset of processes 
% whose set of inputs and set of outputs agree.

\begin{definition}[Permutation subset]
    We say a function $g$ a \keyword{permutation} of a set $S$, if 
    $g$ is a bijection over $S$. 
    We call a nonempty subset $A$ of a finite set $U$ is 
    a \keyword{permutation subset} for a function $f: U\to U$, 
    %  for a 
    % function $f: U\to U$, if the restriction of $f$ to $A$ is a permutation.
    % In particular when $U$ is a finite set,
    % In other words, %% when $U$ is a finite set,
    % a nonempty subset $A$ is a permutation subset for $f$ 
    if $f(A)=A$, i.e., $A$ is a fixed point of $f$. %%
    \unskip\footnote{%
    Nishida called such a fixed point a \emph{family of cycles} \cite{Nishida:Msc20}, 
    because of the well-known fact that every permutation is decomposed into one or more cycles.}
\end{definition}

\begin{lemma}\label{lem:existPermsubset}
    Let $f: U\to U$ be a function over a nonempty finite set $U$. 
    Then there exists a permutation subset $A$ for $f$.
\end{lemma}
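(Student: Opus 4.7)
The plan is to exhibit a permutation subset explicitly via the stabilization of iterated images of $f$. Consider the descending chain
\[
  U \supseteq f(U) \supseteq f^2(U) \supseteq \cdots
\]
obtained by iterating $f$. Since $U$ is finite, this chain must stabilize: there is some smallest $n \geq 0$ such that $f^n(U) = f^{n+1}(U)$. I would then take $A \defeq f^n(U)$ and verify the two required properties. Nonemptiness is immediate, since the image of a nonempty set under a total function is nonempty, and $U$ itself is nonempty by hypothesis; so $f^n(U)$ is nonempty by a trivial induction on $n$. The fixed-point condition reads
\[
  f(A) = f(f^n(U)) = f^{n+1}(U) = f^n(U) = A,
\]
which is exactly the defining equation of a permutation subset.

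The only step that needs justification is that the chain stabilizes, but this is a one-line pigeonhole argument: the cardinalities $\abs{f^k(U)}$ form a weakly decreasing sequence of nonnegative integers bounded by $\abs{U}$, so they are eventually constant, and once $\abs{f^{k+1}(U)} = \abs{f^k(U)}$ the surjection $f\restriction_{f^k(U)} : f^k(U) \twoheadrightarrow f^{k+1}(U)$ between finite sets of equal cardinality forces $f^k(U) = f^{k+1}(U)$. I do not anticipate any real obstacle; the argument is elementary and self-contained.

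As an alternative presentation more in line with the footnote's description of a ``family of cycles,'' one could instead pick any $x \in U$ and, by finiteness of $U$, invoke the pigeonhole principle to obtain indices $i < j$ with $f^i(x) = f^j(x)$; then the set $A = \{f^i(x), f^{i+1}(x), \ldots, f^{j-1}(x)\}$ is a single $f$-cycle and so satisfies $f(A) = A$. Either construction suffices, but the stabilization argument is slightly cleaner since it avoids indexing into orbits.
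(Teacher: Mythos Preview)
Your proposal is correct and follows essentially the same approach as the paper: both arguments exhibit the permutation subset as the stabilization of the descending chain $U\supseteq f(U)\supseteq f^2(U)\supseteq\cdots$, with the paper phrasing the limit as $\bigcap_{i\geq 0} f^i(U)$ (the greatest fixed point) and you taking $f^n(U)$ at the stabilization index, which are the same set. Your alternative cycle-based construction is also sound and is what the paper's footnote alludes to.
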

\begin{proof}
    % Proof by construction of greatest fixed point (or least fixed point with inverse set inclusion)
    Trivially, $f$ is a monotonic function over $2^{U}$, that is, 
    $A\subseteq A'$ implies $f(A)\subseteq f(A')$. 
    By the monotonicity, it hollows that $f^i(U)\supseteq f^{i+1}(U)\neq\emptyset$ 
    from $U\supseteq f(U)$ by induction.  
    Since $U$ is finite, there exists $m$ such that $f^i(U)= f^{i+1}(U)\neq\emptyset$
    for every $i\geq m$. This implies that the set $A= \bigcap_{i=0}^{\infty} f^{i}(U)$
    is a nonempty fixed point, i.e., $f(A)=A$. 
    $A$ is indeed the greatest fixed point of $f$. 
\end{proof}

As an immediate consequence of this lemma, 
$\OutputAssig{X}:\colorSet\to\colorSet$ has a permutation subset 
for any facet $X$ of $\cplI[\amSA{k}]$.

Below we give Nishida's logical obstruction $\Phi$
to the wait-free $k$-set agreement task:
\[
    \Phi ~ \defeq ~
    \bigvee_{a=0}^{n}\neg\Pinput{a}{a} ~
    \vee ~
    \bigvee_{m=1}^{k}\bigvee_{\substack{A\subseteq\colorSet \\ \abs{A}=m}} \Psi^{(m)}_{A},
\]
where $\Psi^{(m)}_{A}$'s are a class of formulas such that 
$A\subseteq\colorSet$ and $\abs{A}=m$ that is defined inductively as follows:
\[
    \Psi^{(m)}_{A} ~ \defeq ~
    \ModD{A}\left(
        \bigvee_{a\in\colorSet\setminus A}\!\neg\Pinput{a}{a} ~
        \vee\bigvee_{a\in\colorSet\setminus A}\ModK{a}
        \Biggl(\bigvee_{j\in A}\bigvee_{a'\in\colorSet}\Pinput{a'}{j}\Biggr)
        \vee\bigvee_{i=m+1}^{n}\bigvee_{\substack{B\subseteq\colorSet\setminus A \\ \abs{B}=i-m}}
        \Psi^{(i)}_{A\cup B}
    \right).
\]

We refrain from repeating his original proof herein, because 
in the rest of this section we provide a generalized form of this 
logical obstruction for superset-closed adversaries.
By the appropriate level of generalization, 
our presentation is much clearer and more succinct than the original one.

\subsection{Logical obstruction to adversarial set agreement task}
\label{subsec:obstAdversary}

We study the unsolvability of set agreement tasks under adversary schedulers, 
generalizing the basic wait-free scheduler to the ones
that allow nonuniform failures. 

We model such an adversary scheduler by a set $\advA$,
called an \keyword{adversary} \cite{GafniKuznetsov10}, 
of possible subsets of correct processes,
that is, $\advA$ is a nonempty subset of $\PowerSet{\colorSet}$
such that %%%$\advA\neq\emptyset$ and 
$\emptyset\not\in\advA$.
An adversary $\advA$ is called \keyword{superset-closed}, if
$P\in\advA$ and $P\subseteq P'\subseteq\colorSet$ implies $P'\in \advA$.
In this paper, we are solely concerned with the superset-closed adversaries
and a superset-closed adversary is simply called an adversary.  

An adversary $\advA$ has two different characterizations by
survivor sets and cores \cite{JunqueiraMarzullo07}.
An adversary $\advA$ can be specified by survivor sets, 
where a \keyword{survivor set} $S$ is a minimal set contained in $\advA$, 
that is, $S\supsetneq P$ for none of $P\in \advA$; 
Dually, an adversary $\advA$ can be specified by cores,
where a \keyword{core} $C$ is a minimal subset of $\colorSet$ 
satisfying $C\cap P\neq\emptyset$ for every $P\in\advA$.
Remark that the wait-free scheduler is just an instance of 
(uniform) adversary that is characterized by survivor sets
$\{ \{a\} \mid a\in\colorSet \}$.

We write $\csize(\advA)$ to denote the \keyword{minimum core size} 
of an adversary $\advA$, i.e., 
the minimum cardinality of the cores defined by 
$\csize(\advA)=\min \{\, \abs{C} \mid \text{$C$ is a core of $\advA$}\}$.

Following \cite{HerlihyRajsbaum10}, 
we model a single round 
execution of distributed processes against a given adversary $\advA$
by a protocol specified by the \keyword{round operator} $\Round{\advA}$. 
For each facet $X\in\smplInput{}$, which corresponds to an initial 
configuration of inputs to processes, 
the round operator $\Round{\advA}$ associates 
an output facet $\{ (a,\mathit{view}_a) \mid a\in\colorSet\}$,
where each $\mathit{view}_a$ is a subset of $X$
assigned to the $a$-colored process
and the set of views satisfy the following properties:
\begin{description}
    \item[(survival)] For each $a\in\colorSet$, 
    $\mathit{view}_a$ is a subset of $X$
    that subsumes a survivor set, i.e., $\coloring(\mathit{view}_a)\in \advA$;
    \item[(self-inclusion)] For each $a\in\colorSet$, $a\in \coloring(\mathit{view}_a)$; 
    \item[(containment)] Either $\mathit{view}_a \subseteq \mathit{view}_{a'}$
    or $\mathit{view}_a \supseteq \mathit{view}_{a'}$ holds, for every $a,a'\in\colorSet$.
\end{description}

Note that the wait-free immediate snapshot is 
further required to satisfy the immediacy condition:
For every $a,a'\in\colorSet$,
$\mathit{view}_{a'} \subseteq \mathit{view}_{a}$ holds 
whenever $a'\in \coloring(\mathit{view}_a)$. 
Missing the immediacy condition, 
the underlying simplicial structure of the round operator is more involved in
general. Even under the wait-free adversary, the round operator 
results in a non-manifold, as studied in \cite{BenavidesRajsbaum16}.
To cope with this extra complexity, 
our impossibility proof refines Nishida's original one.
Specifically we need to show an additional property, as stated in lemma~\ref{lem:minview}.
% whose claim is granted as trivial in Nishida's setting.
% the assumption~\eqref{lem:minview:ii} of the lemma trivially holds
% for the wait-free immediate snapshot. 

We define the round operator
as the product update model $\smplInput{}[\Round{\advA}]$,
where, in abuse of notation,   
$\Round{\advA}$ denotes 
the uniform communicative action model 
$\anglpair{\Facet(\Round{\advA}),\sim^{\Round{\advA}},\amPre{}}$
such that:
\begin{itemize}
    \item Each facet in $\Facet(\Round{\advA})$ is denoted 
    by $\actionRO{X}{\anglpair{S_0,\ldots,S_n}}$, 
    where $X\in\Facet(\smplInput{})$ and 
    the vector $\anglpair{S_0,\ldots,S_n}$ of subsets of $\colorSet$ 
    satisfies:
    \begin{itemize}
        \item
        for each $a\in\colorSet$, $S_{a}$ subsumes a survivor set, i.e., $S_{a}\in\advA$,
        \item 
        for each $a\in\colorSet$, $a\in S_{a}$, and
        \item 
        either $S_{a}\subseteq S_{a'}$ or $S_{a}\supseteq S_{a'}$ holds,
        for every $a,a'\in \colorSet$; 
    \end{itemize}    %
    \item The preconditions are given as in
    the definition of uniform action models in Section~\ref{subsec:taskDEL}.
    % \item the precondition is given by $\amPre{}(\actionRO{X}{\anglpair{S_0,\ldots,S_n}})
    % = \bigwedge_{(a,v)\in X}\Pinput{a}{v}$. 
\end{itemize}
In the following, we may occasionally write $\vec{S}$ to denote the vector 
$\anglpair{S_0,\ldots,S_n}$ for brevity. 

For a facet $\actionRO{X}{\anglpair{S_0,\ldots,S_n}}$, 
let us define a pair of functions 
$\InputAssig{\actionRO{X}{\anglpair{S_0,\ldots,S_n}}}$
and 
$\ViewAssig{\actionRO{X}{\anglpair{S_0,\ldots,S_n}}}$ by 
$\InputAssig{\actionRO{X}{\anglpair{S_0,\ldots,S_n}}}(a)=v$ if and only if 
$(a,v)\in X$ and also by
$\ViewAssig{\actionRO{X}{\anglpair{S_0,\ldots,S_n}}}(a)=S_a$, respectively.
Then, $\actionRO{X}{\vec{S}} \sim^{\smplInput{}[\Round{\advA}]}_a
\actionRO{Y}{\vec{S'}}$ holds iff
$\ViewAssig{\actionRO{X}{\vec{S}}}(a)=\ViewAssig{\actionRO{Y}{\vec{S'}}}(a)$
and 
$\InputAssig{\actionRO{X}{\vec{S}}}(a')=\InputAssig{\actionRO{Y}{\vec{S'}}}(a')$
for every $a'\in\ViewAssig{\actionRO{X}{\vec{S}}}(a)$.
In particular, 
$\actionRO{X}{\vec{S}} \sim^{\smplInput{}[\Round{\advA}]}_a
\actionRO{X}{\vec{S'}}$ iff 
$\ViewAssig{\actionRO{X}{\vec{S}}}(a)=\ViewAssig{\actionRO{X}{\vec{S'}}}(a)$.

Assuming an adversary $\advA$ is fixed, 
we define a class of positive epistemic formulas indexed by $A \in \PowerSet{\colorSet}$:
        \begin{align*}
        \Psi_{A}
        &\defeq
        \begin{cases}
            \Pfalse
            % & \text{if there exists no $B\in \advA$ such that $\colorSet\setminus A\supseteq B$;}
            & \text{if no survivor set is subsumed by $\colorSet\setminus A$;}
            \\
            \ModD{A}\psi_{A}
            & \text{otherwise,}
        \end{cases}
    \end{align*} %%
    where $\psi_{A}$'s are formulas given below 
    \begin{align*}
        \psi_{A}
        &\defeq
        \bigvee_{a\in \colorSet\setminus A}\neg\Pinput{a}{a} ~
        \vee \bigvee_{a\in \colorSet\setminus A}\ModK{a}
        \Biggl(\bigvee_{j\in A} \bigvee_{a'\in\colorSet}\Pinput{a'}{j}\Biggl)
        \vee \bigvee_{B\supsetneq A}\Psi_{B},
    \end{align*}
    defined by
    induction on $A$, ordered by inverse set inclusion
    (with $\psi_{\colorSet}$ being the base case of induction).

    In what follows, 
    we claim that the following formula is the logical obstruction
    against the adversary~$\advA$:
    \[
        \Phi
        \defeq
        \bigvee_{a\in\colorSet}\neg\Pinput{a}{a} ~
        \vee \bigvee_{A\in[\colorSet]^{<c}}\Psi_{A},
    \]
    where $c=\csize(\advA)$ and 
    $[\colorSet]^{<c}=\{A\subseteq\colorSet\mid 0<\abs{A}<c\}$.

Let us write $\minView{\actionRO{X}{\vec{S}}}$ to denote 
the minimum of the views of a facet $\actionRO{X}{\vec{S}}\in\Facet(\smplInput{}[\Round{\advA}])$, 
i.e., $\minView{\actionRO{X}{\vec{S}}}
= \bigcap_{a\in\colorSet} \ViewAssig{\actionRO{X}{\vec{S}}}(a)$.

\begin{lemma} \label{lem:minview}
    Suppose a facet $\actionRO{X}{\vec{S}}\in\Facet(\smplInput{}[\Round{\advA}])$ satisfies
    \begin{enumerate}[(i)]
        \item \label{lem:minview:i}
        $\smplInput{}[\Round{\advA}], \actionRO{X}{\vec{S}} 
        \models\bigwedge_{a\in\colorSet}\Pinput{a}{a}$, and
        \item \label{lem:minview:ii}
        for any $a\in\minView{\actionRO{X}{\vec{S}}}$, 
        $\ViewAssig{\actionRO{X}{\vec{S}}}(a)=\minView{\actionRO{X}{\vec{S}}}$.
    \end{enumerate}
    Then we have $\smplInput{}[\Round{\advA}], \actionRO{X}{\vec{S}} 
    \not\models\psi_{\colorSet\setminus\minView{\actionRO{X}{\vec{S}}}}$.
\end{lemma}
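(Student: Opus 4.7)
The plan is to prove the lemma by induction on $|M|$, where $M=\minView{\actionRO{X}{\vec{S}}}$, and to show directly that each of the three disjuncts of $\psi_A$ (with $A=\colorSet\setminus M$) fails at $\actionRO{X}{\vec{S}}$. The first disjunct $\bigvee_{a\in M}\neg\Pinput{a}{a}$ is ruled out immediately by hypothesis (i). The base case $|M|=1$ is almost immediate as well: the only $B\supsetneq A$ available is $B=\colorSet$, and $\Psi_\colorSet=\Pfalse$ by definition, since no survivor set is contained in $\emptyset$.

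For the knowledge disjunct $\bigvee_{a\in M}\ModK{a}\bigvee_{j\in A}\bigvee_{a'\in\colorSet}\Pinput{a'}{j}$, I would fix $a\in M$ and exhibit an $\sim_a$-related facet $\actionRO{Y}{\vec{T}}$ in which no process holds a value from $A$. Fixing some $a_0\in M$, take $Y=\{(b,b)\mid b\in M\}\cup\{(b,a_0)\mid b\in A\}$, $T_b=M$ for $b\in M$, and $T_b=\colorSet$ for $b\in A$. This is a legitimate facet: by hypothesis (ii) we have $S_b=M$ for every $b\in M$, hence $M\in\advA$; $\colorSet\in\advA$ by superset-closure; self-inclusion is clear; and $M\subseteq\colorSet$ gives the chain. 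Condition (ii) provides $T_a=M=\ViewAssig{\actionRO{X}{\vec{S}}}(a)$, inputs agree on $M$ by (i), so $\sim_a$ holds; every vertex of $\actionRO{Y}{\vec{T}}$ carries a value in $M$, falsifying the inner disjunction.

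For the third disjunct $\bigvee_{B\supsetneq A}\Psi_B$, I consider each $B\supsetneq A$ and put $A'=\colorSet\setminus B\subsetneq M$. If no survivor set is subsumed by $A'$ then $\Psi_B=\Pfalse$ directly. Otherwise $A'$ is a nonempty element of $\advA$, and I would construct $\actionRO{X'}{\vec{S'}}$ by keeping $X'=X$ (which is $\{(a,a)\mid a\in\colorSet\}$ by (i)) and setting $S'_b=A'$ for $b\in A'$, $S'_b=M$ for $b\in M\setminus A'$, and $S'_b=S_b$ for $b\in A$. One checks that the chain $A'\subsetneq M\subseteq S_b$, self-inclusion, and survival all hold, that $\actionRO{X}{\vec{S}}\sim_{\ModD{B}}\actionRO{X'}{\vec{S'}}$ because views agree on every $a\in B$, and that $\actionRO{X'}{\vec{S'}}$ itself satisfies (i) and (ii) with $\minView{\actionRO{X'}{\vec{S'}}}=A'$. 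Since $|A'|<|M|$, the inductive hypothesis yields $\smplInput{}[\Round{\advA}],\actionRO{X'}{\vec{S'}}\not\models\psi_{\colorSet\setminus A'}=\psi_B$, whence $\Psi_B=\ModD{B}\psi_B$ fails at $\actionRO{X}{\vec{S}}$.

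The main obstacle is precisely this final construction: the witness $\actionRO{X'}{\vec{S'}}$ must simultaneously be a bona fide facet of the protocol, be $\sim_{\ModD{B}}$-related to $\actionRO{X}{\vec{S}}$, and satisfy the hypotheses of the lemma so that the inductive hypothesis can be invoked. The crucial observation is that being in the nontrivial case $\Psi_B\neq\Pfalse$ amounts exactly to the survival condition $A'\in\advA$, without which the assignment $S'_b=A'$ for $b\in A'$ would fail to yield a valid facet. This is where the absence of the immediacy condition, as noted in the text preceding the lemma, demands a careful argument beyond Nishida's original wait-free proof.
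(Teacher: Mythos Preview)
Your proof is correct and follows essentially the same approach as the paper: both argue by induction on the minimum view (you on $|M|$, the paper on the set itself ordered by inclusion, with base case ``$M$ is a survivor set''), and both construct the same two witnesses---one for the $\ModK{a}$ disjunct and one for each $\Psi_B$ via a facet with shrunken minimum view. Your treatment is in fact slightly more careful than the paper's in one respect: you explicitly split on whether $A'=\colorSet\setminus B$ lies in $\advA$, noting that this is exactly what is needed for the survival condition $S'_b=A'$ to hold, whereas the paper defines $\vec{S'}$ for every proper subset $B$ of the minimum view without remarking that the construction only yields a legitimate facet when $B\in\advA$ (the other case being handled by $\Psi_{\colorSet\setminus B}=\Pfalse$).
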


\begin{proof}
    Let us write $A$ for the minimum view $\minView{\actionRO{X}{\vec{S}}}$.
    Note that $A$ subsumes a survivor set. %% and hence $A\in \advA$.

    We proceed the proof by induction on $A$.
    Let us consider the case that $A$ is exactly a survivor set. 
    For any $B$ that is a proper superset of $\colorSet\setminus A$, 
    $\colorSet\setminus B$ is a proper subset of $A$
    and hence contains no survivor set. Thus 
    % Since $\Psi_{B} \equiv \Pfalse$ for any $B\supsetneq \colorSet\setminus A$, 
    by the assumption~\eqref{lem:minview:i},
    $\psi_{\colorSet\setminus A}$ is logically equivalent to
    $\xi_A \equiv 
        \bigvee_{a\in A}
        \ModK{a}\Bigl(
            \bigvee_{j\in \colorSet\setminus A}\bigvee_{a'\in\colorSet}\Pinput{a'}{j}\Bigr)
            $.
    % which we denote by $\xi_A$.         
    We may suppose $A\neq\colorSet$, because 
    $\xi_A$ does not hold otherwise.    
    To see $\smplInput{}[\Round{\advA}], \actionRO{X}{\vec{S}}  \not\models \xi_A$, 
    it suffices to show that there is a facet $\actionRO{Y}{\vec{S'}}$
    satisfying $\actionRO{Y}{\vec{S'}}\sim_{a}\actionRO{X}{\vec{S}}$ for every $a\in A$ and also 
    $\InputAssig{\actionRO{Y}{\vec{S'}}}(a')\in A$ for every $a'\in\colorSet$.
    % For instance, once one process $k\in A$ is chosen arbitrarily,
    % the following gives a view-assigning function corresponding to such a facet $Y$: 
    Take $k\in A$ arbitrarily. 
    Then such a facet is instantiated by $\actionRO{Y}{\anglpair{S_0',\ldots,S_n'}}$ 
    where $Y= \{(a,a)\mid a\in A\} \cup \{(a,k)\mid a\in \colorSet\setminus A\}$,
    $S_a' = S_a=A$ if $a\in A$, and $S_a'=\colorSet$ if $a\in \colorSet\setminus A$.

    Let us then consider the case $A$ is a proper superset of a survivor set.
    Similarly as above, we can prove
    $\smplInput{}[\Round{\advA}], \actionRO{X}{\vec{S}} \not\models 
        \bigvee_{i\in A}\neg\Pinput{i}{i} ~
        \vee\bigvee_{i\in A}\ModK{i}
        \Bigl(\bigvee_{j\in \colorSet\setminus A}\bigvee_{a'\in\colorSet}\Pinput{a'}{j}\Bigr)$. 
    Hence it suffices to show that $\smplInput{}[\Round{\advA}], \actionRO{X}{\vec{S}}
    \not\models \Psi_{\colorSet\setminus B}$ for every 
    proper subset $B$ of $A$.
    % with including a survivor set under the induction hypothesis.
    For such a proper subset $B$,
    let us define $\vec{S'}$ by 
    $S'_a = B$ if $a\in B$ and otherwise $S'_a= S_a$. 
    Then, the properties \eqref{lem:minview:i} and \eqref{lem:minview:ii}
    are held by the facet $\actionRO{X}{\vec{S'}}$
    and hence by induction hypothesis we have  
    $\smplInput{}[\Round{\advA}], \actionRO{X}{\vec{S'}}\not\models\psi_{\colorSet\setminus B}$.
    % \marginpar{need def. $\sim_{\ModD{A}}$} %%%
    Since $\actionRO{X}{\vec{S}} \sim_{\ModD{\colorSet\setminus B}} \actionRO{X}{\vec{S'}}$, 
    we conclude that 
    $\smplInput{}[\Round{\advA}], \actionRO{X}{\vec{S}}\not\models
    \ModD{\colorSet\setminus B}\psi_{\colorSet\setminus B}$.
\end{proof}

\begin{proposition} \label{prop:nvalidRA}
    $\smplMf{I}[\Round{\advA}]\nvalid\Phi$.
\end{proposition}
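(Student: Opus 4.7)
My plan is to falsify $\Phi$ at a very specific facet of $\smplInput{}[\Round{\advA}]$, namely the ``identity'' one $w_0 = \actionRO{X_0}{\vec{S_0}}$ where $X_0 = \{(a,a)\mid a\in\colorSet\}$ and $\vec{S_0} = \langle\colorSet,\ldots,\colorSet\rangle$ (so every process inputs its own id and sees everyone). At $w_0$ the atomic propositions $\Pinput{a}{a}$ all hold, so the first disjunct $\bigvee_{a\in\colorSet}\neg\Pinput{a}{a}$ is immediately falsified. It remains to show $\smplInput{}[\Round{\advA}],w_0\nvalid \Psi_{A}$ for every $A\in[\colorSet]^{<c}$.

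The key adversary-theoretic observation is that $|A|<c=\csize(\advA)$ means $A$ is too small to be a hitting set for $\advA$, so some $P\in\advA$ satisfies $P\subseteq \colorSet\setminus A$; by superset-closedness $\colorSet\setminus A\in\advA$ and hence subsumes some survivor set. Consequently $\Psi_A$ is not trivially $\Pfalse$ but rather $\ModD{A}\psi_A$, and I must produce a witness facet $w'$ with $w_0\sim_{\ModD{A}}w'$ on which $\psi_A$ fails.

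For the witness I take $w' = \actionRO{X_0}{\vec{S'}}$ with $S'_a=\colorSet$ if $a\in A$ and $S'_a=\colorSet\setminus A$ if $a\in\colorSet\setminus A$. Quick checks: survival holds (both $\colorSet$ and $\colorSet\setminus A$ belong to $\advA$ by the observation above), self-inclusion is immediate, and the containment condition holds because $\colorSet\setminus A\subseteq\colorSet$, so all views are linearly comparable. For agents $a\in A$, $\ViewAssig{w'}(a)=\colorSet=\ViewAssig{w_0}(a)$ and the inputs on this view agree (both facets use $X_0$), which gives $w_0\sim_{\ModD{A}}w'$. Finally, by construction $\minView{w'}=\bigcap_a S'_a=\colorSet\setminus A$, condition~(i) of Lemma~\ref{lem:minview} holds because $w'$ carries input $X_0$, and condition~(ii) holds because $S'_a = \colorSet\setminus A$ for every $a\in\colorSet\setminus A=\minView{w'}$. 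Lemma~\ref{lem:minview} then yields $\smplInput{}[\Round{\advA}],w'\nvalid\psi_A$, so $\ModD{A}\psi_A$ fails at $w_0$, as needed.

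The hard part is essentially already packaged into Lemma~\ref{lem:minview}; what requires genuine care here is only the adversary-theoretic step that turns the cardinality bound $|A|<c$ into the membership $\colorSet\setminus A\in\advA$, together with verifying that $\vec{S'}$ honors all three well-formedness conditions of the round operator (survival, self-inclusion, and containment). Once that is in place, the epistemic part is a single application of the lemma, uniform in $A$.
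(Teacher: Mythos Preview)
Your proof is correct and follows essentially the same approach as the paper's own proof: both choose the identity facet $X_0=\{(a,a)\mid a\in\colorSet\}$ with all views equal to $\colorSet$, argue via the core-size bound that $\colorSet\setminus A$ contains a survivor set, construct the same witness $\vec{S'}$ (with $S'_a=\colorSet$ for $a\in A$ and $S'_a=\colorSet\setminus A$ otherwise), and then invoke Lemma~\ref{lem:minview}. Your write-up is in fact slightly more careful than the paper's in that you explicitly verify the survival, self-inclusion, and containment conditions for $\vec{S'}$, which the paper leaves implicit.
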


\begin{proof}
    Let $\actionRO{X}{\vec{S}}$ be the facet of $\smplMf{I}[\Round{\advA}]$
    such that $X=\{(a,a)\mid a\in\colorSet\}$ and
    $S_a= \colorSet$ for every $a\in\colorSet$.
    We show $\smplMf{I}[\Round{\advA}],\actionRO{X}{\vec{S}}\not\models\Phi$.
    Clearly 
    $\smplMf{I}[\Round{\advA}],\actionRO{X}{\vec{S}}\not\models
    \bigvee_{a\in\colorSet}\neg\Pinput{a}{a}$
    and thus it suffices to show $\smplMf{I}[\Round{\advA}],\actionRO{X}{\vec{S}}\not\models
    \bigvee_{A\in[\colorSet]^{<c}}\Psi_{A}$.
    Take any $A\in [\colorSet]^{<c}$ arbitrarily.
    % Note that $\colorSet\setminus A$ subsumes a survivor set.
    Since $\abs{A}<c$, $A$ does not subsume a core and thus 
    $A \cap S = \emptyset$ for some survivor set $S\in\advA$. 
    This survivor set $S$ is subsumed by  $\colorSet\setminus A$
    and hence 
    we are obliged to show $\smplMf{I}[\Round{\advA}],\actionRO{X}{\vec{S}} \models 
    \ModD{A}\psi_A$.
        
    Let us define $\vec{S'}$ by $S'_a = \colorSet\setminus A$ if $a\in \colorSet\setminus A$
    and $S'_a = \colorSet$ otherwise. 
    Then 
    $\smplMf{I}[\Round{\advA}],\actionRO{X}{\vec{S'}}\not\models
    \psi_A$ follows from 
    lemma~\ref{lem:minview}. Since $\actionRO{X}{\vec{S'}}\sim_{\ModD{A}}\actionRO{X}{\vec{S}}$, 
    we have $\smplMf{I}[\Round{\advA}],\actionRO{X}{\vec{S}}\not\models
    \Psi_A$. As we have taken 
    $A\in [\colorSet]^{<c}$ arbitrarily, 
    this concludes $\smplMf{I}[\Round{\advA}],\actionRO{X}{\vec{S}}\not\models
    \bigvee_{A\in[\colorSet]^{<c}}\Psi_{A}$.
\end{proof}

\begin{lemma} \label{lem:knownInput}
    Suppose $X\times\vec{d}$ is a facet of $\cplI[\amSA{k}]$.
    Then, 
    for every
    $a\in\colorSet$ and $A\subseteq\colorSet$, 
    $\OutputAssig{X\times\vec{d}}(a)\in A$ implies
    $\cplI[\amSA{k}], X\times\vec{d}\models
    \ModK{a}\bigl(\bigvee_{i\in A}\bigvee_{a'\in\colorSet}\Pinput{a'}{i}\bigr)$.
\end{lemma}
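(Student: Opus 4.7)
The plan is to unfold the semantics of $\ModK{a}$ and reduce the statement to a straightforward observation about how outputs in $\amSA{k}$ are constrained to be existing inputs by the precondition. First, I would fix an arbitrary facet $Y\times\vec{d'}\in\Facet(\cplI[\amSA{k}])$ with $X\times\vec{d}\sim_a^{\cplI[\amSA{k}]} Y\times\vec{d'}$ and show that $\cplI[\amSA{k}], Y\times\vec{d'}\models \bigvee_{i\in A}\bigvee_{a'\in\colorSet}\Pinput{a'}{i}$.

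Next, I would invoke the characterization of the indistinguishability relation established just before the statement of Section~\ref{subsec:obstructionWF}, namely that $X\times\vec{d}\sim_a^{\cplI[\amSA{k}]} Y\times\vec{d'}$ entails $\OutputAssig{X\times\vec{d}}(a)=\OutputAssig{Y\times\vec{d'}}(a)$. Writing $d'_a$ for this common value, the hypothesis $\OutputAssig{X\times\vec{d}}(a)\in A$ gives $d'_a\in A$.

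Finally, I would exploit the defining condition of the product update: since $Y\times\vec{d'}$ is a facet of $\cplI[\amSA{k}]$, the initial model $\cplI$ satisfies $\amPre{}(\vec{d'})=\bigwedge_{b\in\colorSet}\bigl(\bigvee_{a'\in\colorSet}\Pinput{a'}{d'_b}\bigr)$ at $Y$. Taking the conjunct indexed by $b=a$, some process $a'$ has $(a',d'_a)\in Y$, hence $\Pinput{a'}{d'_a}$ holds at $Y\times\vec{d'}$. Because $d'_a\in A$, this witnesses $\bigvee_{i\in A}\bigvee_{a'\in\colorSet}\Pinput{a'}{i}$ at $Y\times\vec{d'}$, completing the verification.

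There is no real obstacle here; the lemma is essentially a bookkeeping step that packages together the definition of $\sim_a$ on the product update and the fact that the $\amSA{k}$ precondition forces every chosen output value to occur as somebody's input. The only care needed is to keep the roles of input and output assignments straight and to remember that membership of $Y\times\vec{d'}$ in the product update is itself the guarantee that $\amPre{}(\vec{d'})$ is satisfied by $Y$ in $\cplI$.
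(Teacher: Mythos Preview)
Your proposal is correct and follows essentially the same approach as the paper's own proof: both pick an arbitrary $Y\times\vec{d'}$ with $X\times\vec{d}\sim_a Y\times\vec{d'}$, use that $\OutputAssig{Y\times\vec{d'}}(a)=\OutputAssig{X\times\vec{d}}(a)\in A$, and then invoke the precondition to conclude that this output value appears as some process's input in $Y$. The only cosmetic difference is that the paper cites the already-recorded consequence $\InputAssig{Y\times\vec{d'}}(\colorSet)\supseteq\OutputAssig{Y\times\vec{d'}}(\colorSet)$ rather than re-deriving it from $\amPre{}(\vec{d'})$ as you do.
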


\begin{proof}
    Let $Y\times\vec{d'}$ be any facet such that 
    $X\times\vec{d}\sim_{a}^{\cplI[\amSA{k}]} Y\times\vec{d'}$.
    Then, $\OutputAssig{Y\times\vec{d'}}(a)=\OutputAssig{X\times\vec{d}}(a)\in A$.
    Since $\InputAssig{Y\times\vec{d'}}(\colorSet)\supseteq 
    \OutputAssig{Y\times\vec{d'}}(\colorSet)$,
    it follows that $\InputAssig{Y\times\vec{d'}}(a')\in A$
    for some $a'\in\colorSet$. This implies
    $\cplI[\amSA{k}], Y\times\vec{d'}\models
    \bigvee_{i\in A}\bigvee_{a'\in\colorSet}\Pinput{a'}{i}$.
    Hence we are done.
\end{proof}

\begin{proposition} \label{prop:validSAc}
    Let $\advA$ be any adversary. 
    Then, 
    $\smplInput{}[\amSA{k}]\valid\Phi$ holds for every $k<\csize(\advA)$.
\end{proposition}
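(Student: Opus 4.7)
The plan is to strengthen the target statement into a claim amenable to reverse induction on $|A|$, exploiting Lemma~\ref{lem:existPermsubset} to grow permutation subsets along the induction.

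First, I reduce $\Phi$ to a structural claim about permutation subsets of outputs. Pick any facet $X\times\vec{d}$ of $\smplInput{}[\amSA{k}]$. If $\InputAssig{X\times\vec{d}}(a)\neq a$ for some $a\in\colorSet$, the disjunct $\neg\Pinput{a}{a}$ of $\Phi$ already holds; otherwise $\InputAssig{X\times\vec{d}}$ is the identity, and Lemma~\ref{lem:existPermsubset} applied to $\OutputAssig{X\times\vec{d}}\colon\colorSet\to\colorSet$ gives a nonempty permutation subset $A$ with $\OutputAssig{X\times\vec{d}}(A)=A$. Since $A\subseteq\OutputAssig{X\times\vec{d}}(\colorSet)$, whose cardinality is at most $k<c$, we have $A\in[\colorSet]^{<c}$; moreover $|A|<c$ forces $A$ to contain no core, so $\colorSet\setminus A$ subsumes a survivor set and $\Psi_A$ unfolds to $\ModD{A}\psi_A$. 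The proposition therefore follows from the following strengthened claim: for every facet $Z$ of $\smplInput{}[\amSA{k}]$ and every $A\subseteq\colorSet$ with $0<|A|\leq k$ such that $\OutputAssig{Z}(A)=A$ and $\InputAssig{Z}(a)=a$ for all $a\in A$, we have $\smplInput{}[\amSA{k}],Z\models\Psi_A$.

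The claim will be proved by induction on the nonnegative integer $k-|A|$. Fix any $Z'$ with $Z\sim_{\ModD{A}}Z'$ and verify $\psi_A$ at $Z'$. Product-update indistinguishability forces $\InputAssig{Z'}(a)=a$ and $\OutputAssig{Z'}(a)=\OutputAssig{Z}(a)\in A$ for every $a\in A$, so $\OutputAssig{Z'}(A)=A$. I then case-split along the three disjuncts of $\psi_A$: if some $a\in\colorSet\setminus A$ satisfies $\InputAssig{Z'}(a)\neq a$, the first disjunct holds; if some $a\in\colorSet\setminus A$ satisfies $\OutputAssig{Z'}(a)\in A$, Lemma~\ref{lem:knownInput} delivers the second disjunct; otherwise $\InputAssig{Z'}$ is the identity on all of $\colorSet$ and $\OutputAssig{Z'}$ restricts to a function $\colorSet\setminus A\to\colorSet\setminus A$. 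In this last case Lemma~\ref{lem:existPermsubset} (applied to the restriction, whose domain is nonempty because $|A|<c\leq n+1$) yields a nonempty permutation subset $B'\subseteq\colorSet\setminus A$; setting $B=A\cup B'$, one checks $B\supsetneq A$, $\OutputAssig{Z'}(B)=B$, $|B|\leq|\OutputAssig{Z'}(\colorSet)|\leq k$, and $\InputAssig{Z'}(a)=a$ for every $a\in B$, so the induction hypothesis applied to $(Z',B)$ delivers $\Psi_B$ and hence the third disjunct. The base $|A|=k$ is handled by observing that the last case cannot arise there: $A\subseteq\OutputAssig{Z'}(\colorSet)$ combined with $|\OutputAssig{Z'}(\colorSet)|\leq k=|A|$ would force $\OutputAssig{Z'}(\colorSet)=A$, so every $a\in\colorSet\setminus A$ (which is nonempty) would give $\OutputAssig{Z'}(a)\in A$, contradicting the defining assumption of that case.

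The hard part will be identifying the correct strengthening: the invariant must simultaneously track the permutation-subset property $\OutputAssig{Z}(A)=A$ and the identity-input condition on $A$, since both are required to enlarge $A$ to $B$ in the inductive step. The subtlety lies in how $\ModD{A}$-indistinguishability constrains $Z'$ only at coordinates in $A$, whereas the exclusion of the first two disjuncts of $\psi_A$ constrains $Z'$ on $\colorSet\setminus A$; these two ingredients must combine precisely to let Lemma~\ref{lem:existPermsubset} be applied to the restricted function on $\colorSet\setminus A$ and produce a larger permutation subset $B$ satisfying $|B|\leq k$ along with all invariants, enabling the recursive appeal to the induction hypothesis.
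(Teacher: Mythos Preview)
Your proof is correct and follows essentially the same approach as the paper: both reduce to the ``rainbow'' facet where inputs are the identity, find a permutation subset $A$ of $\OutputAssig{}$, and then prove $\Psi_A$ by reverse induction on $|A|$, using Lemma~\ref{lem:knownInput} for the base case and Lemma~\ref{lem:existPermsubset} on the restriction to $\colorSet\setminus A$ in the inductive step to grow $A$ to a larger permutation subset $B$. The only notable difference is presentational: you isolate the induction hypothesis as an explicit strengthened claim (carrying the invariants $\OutputAssig{Z}(A)=A$ and $\InputAssig{Z}(a)=a$ on $A$), whereas the paper leaves the induction hypothesis implicit; your formulation is arguably cleaner, though the input-identity condition on $A$ is in fact redundant for the induction to go through.
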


\begin{proof}
    Let $k<\csize(\advA)$ and 
    $X\times\vec{d}$ be any facet of $\smplInput{}[\amSA{k}]$. 
    We will show $\smplInput{}[\amSA{k}], X\times\vec{d}\models\Phi$.
    We may suppose $\smplInput{}[\amSA{k}], X\times\vec{d}\not\models
    \bigvee_{a\in\colorSet}\neg \Pinput{a}{a}$, 
    because $\smplInput{}[\amSA{k}], X\times\vec{d}\models\Phi$ immediately 
    holds otherwise.
    By lemma~\ref{lem:existPermsubset}, the function $\OutputAssig{X\times\vec{d}}$
    has %% maximum???
    a permutation subset $A$. 
    % That is,     $\OutputAssig{X\times\vec{d}}(A)=A$.
    Since $\abs{A}\leq \max \{ \abs{\OutputAssig{X\times\vec{d}}(Z)} \mid 
    Z\subseteq\colorSet\}\leq k< \csize(\advA)$, 
    it suffices to show that 
    $\smplInput{}[\amSA{k}], X\times\vec{d}\models\Psi_{A}$
    for this permutation subset $A$. 
    Since $\abs{A}<\csize(\advA)$, by a similar discussion in the proof
    of Proposition~\ref{prop:nvalidRA}, 
    we are obliged to show $\smplInput{}[\amSA{k}], X\times\vec{d} \models 
    \ModD{A}\psi_A$.
    
    We proceed by induction 
    on the size of $\colorSet\setminus A$.
    First, consider the base case $\abs{A}=k$.
    Let us show that $\smplInput{}[\amSA{k}], Y\times\vec{d'}\models\psi_{A}$ holds
    for any facet $Y\times\vec{d'}$ such that
    $Y\times\vec{d'}\sim_{\ModD{A}}^{\smplInput{}[\amSA{k}]}X\times\vec{d}$.
    Since $A=\OutputAssig{X\times\vec{d}}(A)=\OutputAssig{Y\times\vec{d'}}(A)
    \subseteq \OutputAssig{Y\times\vec{d'}}(\colorSet)$
    and $\abs{\OutputAssig{Y\times\vec{d'}}(\colorSet)}\leq k$,
    we have $\OutputAssig{Y\times\vec{d'}}(\colorSet)\subseteq A$.
    Hence by lemma~\ref{lem:knownInput},
    $\smplInput{}[\amSA{k}],Y\times\vec{d'}
    \models\ModK{a}\bigl(\bigvee_{i\in A}\bigvee_{a'\in\colorSet}\Pinput{a'}{i}\bigr)$
    holds for every $a\in\colorSet$.
    This entails $\smplInput{}[\amSA{k}],Y\times\vec{d'}\models\psi_{A}$.

    Next, consider the case $\abs{A}<k$. % $\abs{A}<c-1$
    Again,
    let us show that $\smplInput{}[\amSA{k}], Y\times\vec{d'}\models\psi_{A}$
    holds for any facet $Y\times\vec{d'}$ such that 
    $Y\times\vec{d'}\sim_{\ModD{A}}^{\smplInput{}[\amSA{k}]}X\times\vec{d}$.
    Since $\OutputAssig{X\times\vec{d}}$ has $A$ as a permutation subset, 
    so does $\OutputAssig{Y\times\vec{d'}}$. 
    We will show that $\smplInput{}[\amSA{k}], Y\times\vec{d'}\models
    \bigvee_{B\supsetneq A}\Psi_{B}$ holds, 
    assuming 
    $\smplInput{}[\amSA{k}],Y\times\vec{d'}\not\models
    \bigvee_{a\in \colorSet\setminus A}\neg\Pinput{a}{a} ~
    \vee \bigvee_{a\in \colorSet\setminus A}\ModK{a}
    \bigl(\bigvee_{j\in A}\bigvee_{a'\in\colorSet}\Pinput{a'}{j}\bigr)$.
    % For every $a\in A$, it follows from
    It follows from
    $\smplInput{}[\amSA{k}],Y\times\vec{d'}\not\models 
    \bigvee_{a\in \colorSet\setminus A}\ModK{a}
    \bigl(\bigvee_{j\in A}\bigvee_{a'\in\colorSet}\Pinput{a'}{j}\bigr)$
    that $\OutputAssig{Y\times\vec{d'}}(\colorSet\setminus A)
    \subseteq \colorSet\setminus A$ by lemma~\ref{lem:knownInput}.
    Thus by lemma~\ref{lem:existPermsubset}, 
    $\OutputAssig{Y\times\vec{d'}}$ has a permutation subset $A'$,
    that is, $\OutputAssig{Y\times\vec{d'}}(A')=A'\subseteq \colorSet\setminus A$.
    Therefore $A\cup A'$ is a
    proper superset of $A$ and is also a permutation subset for 
    $\OutputAssig{Y\times\vec{d'}}$.  
    By induction hypothesis we have 
    $\smplInput{}[\amSA{k}],Y\times\vec{d'}\models \Psi_{A\cup A'}$,
    which entails $\smplInput{}[\amSA{k}],Y\times\vec{d'}\models \bigvee_{B\supsetneq A}\Psi_{B}$.
\end{proof}

By proposition~\ref{prop:nvalidRA} and \ref{prop:validSAc},
$\Phi$ is a logical obstruction and hence the 
following impossibility result follows.
\begin{theorem} \label{thm:impossibilitySA}
    Let $\advA$ be an adversary.  If $k<\csize(\advA)$, 
    the $k$-set agreement task $\smplInput{}[\amSA{k}]$
    is not solvable the protocol 
    $\smplInput{}[\Round{\advA}]$,
    a single round of the round operator.
\end{theorem}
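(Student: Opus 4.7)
The plan is to obtain the theorem as a direct application of Theorem~\ref{th:obstruction}, using $\Phi$ as the witness obstruction formula. Theorem~\ref{th:obstruction} requires three ingredients: a positive epistemic formula $\varphi$, the validity of $\varphi$ on the task model $\smplInput{}[\amSA{k}]$, and the existence of a facet of the protocol model $\smplInput{}[\Round{\advA}]$ that falsifies $\varphi$. The last two ingredients are precisely the contents of Propositions~\ref{prop:validSAc} and~\ref{prop:nvalidRA}, respectively, so the entire argument reduces to assembling these pieces and verifying the positivity hypothesis.

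First I would confirm that $\Phi$ is positive. This is a short induction on the definition of $\Psi_A$, ordered by reverse inclusion: the base case $\Pfalse \defeq p \wedge \neg p$ contains only a negation of an atomic proposition, and in the inductive case the formula $\ModD{A}\psi_A$ contains the modalities $\ModD{A}$ and $\ModK{a}$ only unnegated, while all occurrences of negation in $\psi_A$ are on atomic propositions $\Pinput{a}{a}$ (or come from $\Psi_B$ with $B\supsetneq A$, which is positive by induction hypothesis). The outermost disjunction defining $\Phi$ adds only further negated atomic propositions, so $\Phi$ is positive.

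With positivity in hand, I would simply instantiate Theorem~\ref{th:obstruction} with $\smplMf{I}=\smplInput{}$, $\actMf{T}=\amSA{k}$, $\actMf{C}=\Round{\advA}$, and $\varphi=\Phi$. Proposition~\ref{prop:validSAc} (using the hypothesis $k<\csize(\advA)$) gives $\smplInput{}[\amSA{k}]\valid\Phi$, while Proposition~\ref{prop:nvalidRA} gives $\smplInput{}[\Round{\advA}]\nvalid\Phi$. The conclusion that $\smplInput{}[\amSA{k}]$ is not solvable by $\smplInput{}[\Round{\advA}]$ then follows immediately.

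Since the substantial combinatorial content lives entirely inside the two propositions, there is essentially no remaining obstacle at the level of the final theorem; the proof is one line once the positivity check is made. The only bookkeeping point is ensuring that $\smplInput{}[\Round{\advA}]$ qualifies as the product update by a uniform communicative action model as required by the obstruction framework, and this is immediate from the definition of $\Round{\advA}$ given in Section~\ref{subsec:obstAdversary}.
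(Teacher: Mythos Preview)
Your proposal is correct and follows essentially the same route as the paper: the paper simply states that by Propositions~\ref{prop:nvalidRA} and~\ref{prop:validSAc}, $\Phi$ is a logical obstruction, whence the theorem follows via Theorem~\ref{th:obstruction}. Your explicit verification that $\Phi$ is positive is a detail the paper leaves implicit, but the overall argument is identical.
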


% !TEX root = ./00main.tex

\section{Conclusion and Future Work}
\label{sec:conclusion}

%% General conclusion
We have applied the logical method 
developed by Goubault, Ledent, and Rajsbaum \cite{GoubaultLedentRajsbaum:GandALF18}
to show the impossibility of the set agreement tasks for 
superset-closed adversaries,
by giving concrete logical obstruction formulas, which are
generalization of Nishida's logical obstruction for the wait-free model \cite{Nishida:Msc20}.
The method based on logical obstruction allows an
elementary inductive proof, without recourse to 
sophisticated topological tools.
The instances of logical obstruction 
exemplify that logical method
would serve as a feasible alternative to topological method.

There are several topics to pursue that merit further investigation.
%
%% Application to other tasks and computation models.
First, it is quite interesting what varieties of impossibility
can be proven by devising concrete logical obstruction formulas. 
To date, the topological method has been extensively applied to show 
impossibility results 
and still keep expanding its application area, 
e.g., network computing \cite{CastanedaFPRRT19,FraigniaudPaz20}. 
Finding concrete logical obstruction formulas to
these particular instances would be an interesting topic of its own right. 

We would also expect that logical obstruction could 
give a deeper logical understanding on the nature of unsolvability
of distributed tasks.
As observed in \cite{GoubaultLazicLedentRajsbaum:DaLi19}, 
the original logical framework of \cite{GoubaultLedentRajsbaum:GandALF18}, 
whose atomic propositions are allowed to mention input values only, 
cannot refute the solvability of the equality negation task \cite{LoHadzilacos:00},
while an extended framework that allows atomic propositions to mention 
output decision values has logical obstruction.
This seems to suggest that solvability of distributed tasks could be
classified in terms of expressibility of epistemic logic.

%% classification of impossible tasks by atomic proposition
% As observed in \cite{GoubaultLazicLedentRajsbaum:DaLi19}, 
% the ability for expressing obstruction in epistemic formulas is dependent on the set 
% of available atomic proposition: 
% If atomic propositions are allowed to mention, at minimal, 
% initial input values as in the setting of \cite{GoubaultLedentRajsbaum:GandALF18} 
% and the present paper, 
% the equality negation task, which is not wait-free unsolvable, 
% can have no logical obstruction;
% On the other hand, if output decision values are as well allowed
% to be mentioned, 
% virtually any task that has a topological 
% obstruction also has a logical obstruction. 
% This suggests that there may be intermediate classes of
% atomic propositions that subclassifies unsolvable tasks,
% which would give a deeper logical understanding on the nature of unsolvability
% of distributed tasks.

%% many-round
As we have mentioned in Introduction,
this paper solely concerns single round protocols,
leaving multiple round protocols for future investigation.  
In principle we are able to discuss multiple round protocols in the epistemic
logic setting, but we would need
to work on a different epistemic model for each incremental round step.
It would be a challenging topic to give a remedy for this
by providing, say, a series of obstruction formulas 
indexed by the number of round steps.

\subsection*{Acknowledgment}
The second author would like to thank Yutaro Nishida, 
who left academia just after writing up his Master's thesis,
for fruitful discussion on the topic. 
The second author is supported by JSPS KAKENHI Grant Number 20K11678.

{
\bibliographystyle{plainurl}
\bibliography{distrib}
}

\end{document}